\documentclass[11pt]{amsart}

\usepackage[letterpaper, margin=1in]{geometry}  
\usepackage[parfill]{parskip}
\usepackage{amsmath}
\usepackage{amssymb}
\usepackage{amsfonts}
\usepackage{amsthm}
\usepackage[titletoc]{appendix}
\usepackage{hyperref}
\usepackage{tikz}
\usepackage[all]{xy}
\usepackage{stmaryrd}
\usepackage{indentfirst}
\usepackage{graphicx}
\usepackage{tabularx}
\usepackage{multirow}
\usepackage{footnote}
\usepackage{enumitem}
\usepackage{tikz-cd}
\usepackage{blkarray}
\makesavenoteenv{tabular}

\newcolumntype{Y}{>{\centering\arraybackslash}X}
\newcolumntype{L}{>{\centering\arraybackslash}m{3cm}}
\definecolor{light-gray}{gray}{0.9}

\theoremstyle{definition}
\newtheorem{thm}{Theorem}[section]
\newtheorem{prop}[thm]{Proposition}
\newtheorem{lemma}[thm]{Lemma}
\newtheorem{cor}[thm]{Corollary}
\newtheorem{defn}[thm]{Definition}
\newtheorem{defnprop}[thm]{Definition/Proposition}
\newtheorem{rmk}[thm]{Remark}
\newtheorem{ex}[thm]{Example}

\newtheorem{note}[thm]{Note}

\newcommand{\bb}[1]{\mathbb{#1}}
\newcommand{\mr}[1]{\mathrm{#1}}
\newcommand{\mc}[1]{\mathcal{#1}}

\newcommand{\mf}[1]{\mathfrak{#1}}
\newcommand{\del}{\partial}
\newcommand{\ol}{\overline}
\newcommand{\ul}{\underline}
\newcommand{\wt}{\widetilde}

\newcommand{\eps}{\varepsilon}

\newcommand{\llb}{\llbracket}
\newcommand{\rrb}{\rrbracket}

\newcommand*\circled[1]{\tikz[baseline=(char.base)]{ \node[shape=circle,draw,inner sep=2pt] (char) {#1};}}

\DeclareMathOperator{\delbar}{\bar{\del}}

\DeclareMathOperator{\perf}{Perf}

\DeclareMathOperator{\sym}{Sym}

\DeclareMathOperator{\tr}{Tr}

\DeclareMathOperator{\SL}{SL}
\DeclareMathOperator{\SU}{SU}

\DeclareMathOperator{\SHO}{SHO}

\DeclareMathOperator{\C}{\bb{C}}

\DeclareMathOperator{\Z}{\bb{Z}}
\DeclareMathOperator{\PV}{PV}

\DeclareMathOperator{\Vect}{Vect}

\usepackage[backend=bibtex,
            isbn=false,
            doi=false,
            giveninits=true,
            style=alphabetic,
            useprefix=true,
            maxcitenames=5,
            maxbibnames=5,
            sorting=nyt,
            labelnumber, defernumbers = true,
            citestyle=alphabetic]{biblatex}


\bibliography{references.bib}

\begin{document}

\title{Minimal models for minimal BCOV theories}

\author{Surya Raghavendran and Philsang Yoo}
\date{}

\maketitle

\begin{abstract}
Minimal BCOV theory is a classical field theory which describes a subclass of deformations of the category of perfect complexes on a Calabi--Yau variety. We compute minimal models for $L_\infty$-algebras describing minimal BCOV theory and its variants on flat space and  find that they give certain $L_\infty$-extensions of the infinite-dimensional simple Lie superalgebra $\SHO(d|d)$. We apply this computation to compare an $\mf{sl}_2$ action on an odd two-dimensional central of $\SHO(3|3)$ first discovered by Kac to an action of $\mf{sl}_2$ on a variant of minimal BCOV theory previously found by the authors.
\end{abstract}

\setcounter{tocdepth}{1}

\bigskip 

\tableofcontents

\section{Introduction}

In the seminal work \cite{BCOV}, Bershadsky, Cecotti, Ooguri, and Vafa studied a string field theory for the topological B-model on a Calabi--Yau 3-fold $X$. The equations of motion of the theory recover the equations of Kodaira--Spencer, describing deformations of the complex structure on $X$. The mathematical study of this theory was initiated by Costello and Li \cite{costello2012quantum} who constructed a larger theory which they called \textit{BCOV theory}, whose equations of motion describe deformations of the Calabi--Yau category $\perf X$ in arbitrary dimensions.

A salient feature of the work of \cite{costello2012quantum} is its use of a version of the classical Batalin--Vilkovisky (BV) formalism. In modern parlance, the classical BV formalism describes the structures present on formal neighborhoods of points in moduli spaces of solutions to equations of motion using tools from derived deformation theory. Indeed, derived deformation theory allows one to describe formal neighborhoods of points in moduli spaces in terms of the structure of an $L_\infty$-algebra on the shifted tangent complex \cite{dagx}, \cite{pridham}, \cite{hinich}. If the moduli space in question arises as solutions to some local PDE on a manifold, the brackets encoding the $L_\infty$-structure may further be taken to be poly-differential operators. Finally, the PDEs of classical field theory typically arise from some variational problem, which is codified in a shifted-symplectic structure on the corresponding formal moduli problem.

In \cite{costello2012quantum} it was noted that when articulated in the BV formalism, the space of fields does not quite have a shifted symplectic structure. Rather it has a shifted Poisson structure with a very large degeneracy. This reflects the nonlocal nature of the kinetic term in the action first introduced by \cite{BCOV}, and as explained by the authors of \cite{costello2012quantum}, arises from a variant of the construction of Barannikov--Kontsevich, which realizes the formal neighborhood of $\perf X$ in the moduli of Calabi--Yau categories as a nonlinear Lagrangian in a symplectic vector space \cite{barannikov2001quantum}, \cite{giventalcoates}.

The Poisson nature of the moduli problem motivated the definition of  \textit{minimal BCOV theory}, which we recall in Section \ref{minbcov} below. Morally, minimal BCOV theory can be thought of as akin to a symplectic leaf in the ambient formal moduli problem. Though the moduli-theoretic interpretation of this symplectic leaf is obscured -- it is no longer clear what sublcass of deformations of $\perf X$ the theory describes -- minimal BCOV theory has since appeared in a range of places, most notably in descriptions of twists of type I and type II supergravity in ten dimensions \cite{CostelloLi}, \cite{spinortwist}, \cite{Raghavendran_2023}.

Given an $L_\infty$-algebra, one may ask for a so-called \textit{minimal model}, which is a quasi-isomorphic $L_\infty$-algebra in which no differential appears. In this paper, we compute minimal models for minimal BCOV theory on flat space. We find that on $\C^d$ the minimal model is equivalent to a one-dimensional central extension of the Lie superalgebra $\SHO(d|d)$ that appears in Kac's classification \cite{Kac}. We further consider variants of minimal BCOV theory that have also featured in applications to twisted supergravity. The space of fields in BCOV theory includes divergence free holomorphic polyvector fields on a Calabi--Yau variety. A common trick in the literature is to assume that certain divergence free holomorphic polyvector fields are in fact total divergences -- we codify this trick in Definition \ref{def:bcovpot} and show that the minimal model of the underlying $L_\infty$-algebra on $\C^d$ is a further $L_\infty$-extension of this one-dimensional odd central extension of $\SHO(d|d)$. 

In the final section, we analyze a particular three-dimensional example, where the aforementioned $L_\infty$-extension becomes an ordinary extension as a Lie algebra, which further descends to $\SHO(3|3)$. The resulting odd two-dimensional central extension of $\SHO(3|3)$ was studied by Kac, Kac-Cheng, and Kac-Cantarini in \cite{Kac}, \cite{ChengKac}, \cite{CantariniKac}. In the first of these references, it was noted that $\SHO(3|3)$ and this odd two-dimensional central extension have an action of $\mf{sl}_2$ by outer derivations. We show how this action comes from a natural action on a variant of minimal BCOV theory in three complex dimensions, which we previously related to the action of S-duality on a twist of type IIB supergravity \cite{SuryaYoo}.

\subsection*{Conventions}
Throughout this article, we work in the BV formalism as articulated in the books of Costello and Gwilliam \cite{CG1}, \cite{CG2}. We will frequently work with $\Z/2$ graded BV theories whose underlying $L_\infty$-algebras have $\Z$-gradings. In these cases, we will often use notation that is reflective of the underlying $\Z$ grading, though it should ultimately only be remembered mod 2.

\subsection*{Acknowledgements}
We wish to thank Victor Kac and Nicoletta Cantarini for correspondence related to the content of Section \ref{sl2}. SR would like to thank Ingmar Saberi and Brian Williams for numerous discussions on topics in the same realm of ideas as this article. Philsang Yoo was supported by the National Research Foundation of Korea (NRF) grant funded by the Korea government(MSIT) (No. 2022R1F1A107114212).

\section{BCOV theory and its variants}

We begin by describing the set-up following Costello--Li \cite{costello2012quantum}. The readers are advised to consult the original article for more detail.

Let $(X, \Omega_X)$ be a $d$-dimensional Calabi--Yau manifold with a holomorphic volume form $\Omega_X$. We first describe the free-limit of the theory. Let $\PV^i(X)$ denote the sheaf of holomorphic sections of $\wedge^i T_X$ where $T_X$ is the holomorphic tangent bundle of $X$. This sheaf has a locally free resolution given by the complex from the space $\PV^{i,\bullet}(X) = \bigoplus_{j= 0}^d \PV^{i,j}(X)= \bigoplus_{j= 0}^d \Omega^{0,j}(X, \wedge^i  T_X )$ together with the $\delbar$ operator. Letting, $\PV(X) =  \bigoplus_{i,j} \PV^{i,j}(X)[-i-j]$, the wedge product and $\delbar$ operator equip $\PV (X)$ with the structure of a commutative DG algebra. 

Contracting with $\Omega_X$ yields an isomorphism of complexes \[(-) \vee \Omega_X \colon \PV^{i,\bullet}(X) \cong \Omega^{d-i,\bullet}(X),\] which we use to transport the holomorphic de Rham differential $\del$ on $\Omega ^{\bullet,\bullet}(X)$ to an operator on $\PV^{\bullet,\bullet}(X)$. The result $\del \colon \PV^{i,j}(X) \to \PV^{i-1,j}(X) $ is the divergence operator with respect to a holomorphic volume form $\Omega_X$. With this in hand, for each $i$, we can consider the subsheaf $\PV^i_\del (X) = \ker \del \subset \PV^i (X)$ of divergence free holomorphic $i$-polyvector fields on $X$. 

The sheaf of complexes underlying the free limit of BCOV theory is given by \[ \mc E_{\mr{BCOV}}(X) = (\PV(X)\llb t\rrb [2], Q= \delbar+t\del)\] where $t$ is a parameter of cohomological degree 2.  This complex provides a locally free resolution of the sheaf $\oplus_{i=0}^d \PV^i_\del (X)$ of divergence-free holomorphic polyvector fields on $X$.  The degree-shift arranges for the summand $\PV^{1,1}(X)$ to sit in degree 0; such elements describe deformations of the complex structure on $X$ and their presence in degree 0 articulates a sense in which BCOV is a theory of gravitational nature.

To specify a free theory in the BV formalism, one also specifies a nondegenerate local pairing of degree $-1$ on the space of fields, which equips the space of fields with the structure of $(-1)$-shifted symplectic space and hence, in particular, a Poisson bracket of cohomological degree 1 on its function space. However, BCOV theory fails to be a BV theory in the usual sense; this failure is already visible in the original formulation of the theory which involves a non-local kinetic term. Letting \[\tr \colon \PV_c(X)\to \bb C\qquad \text{denote} \qquad \mu\mapsto \int_X (\mu  \vee  \Omega_X) \wedge \Omega_X,\] where the subscript of $c$ denotes compactly supported sections, the kinetic term takes the form $S_{\mr{free}}(\mu) = \frac{1}{2}\tr (\mu \ol{\del} \del^{-1} \mu)$.

As observed in \cite{costello2012quantum}, such non-local kinetic terms can be dealt with using a variant of the BV formalism: BCOV theory is a so-called \textit{Poisson BV theory} in the sense of \cite{butson2016degenerate}. Accordingly, the theory instead has a degenerate copairing of odd degree specified by the constant coefficient bivector $(\del \otimes 1)\delta_{\mr{diag}}$ -- this equips the space of functions on the fields with the structure of a Poisson bracket of cohomological degree $2d-5$. As such, BCOV theory in 3-complex dimensions is a $\Z$-graded theory, but in all other dimensions, it is only $\Z /2$-graded.

The interactions of the theory are codified in the structure of a dg Lie algebra. Note that since the divergence operator is a second-order differential operator of cohomological degree $-1$, the expression \[ [\mu,\nu]_{\mr{SN}} := (-1)^{|\mu|-1} \left( \del (\mu  \nu) - (\del \mu)\nu - (-1)^{|\mu|} \mu  (\del \nu) \right)\] defines a Poisson bracket of degree $-1$, the familiar Schouten--Nijenhuis bracket on polyvector fields. Accordingly, $[-,-]_{\mr{SN}}$ satisfies a shifted version of graded Lie algebra axioms in that the following hold:
\begin{align*}
 [\alpha,\beta]_{\mr{SN}}& = -(-1)^{(|\alpha|-1) (|\beta|-1)} [\beta,\alpha]_{\mr{SN}} \\
 [\alpha,[\beta,\gamma]_{\mr{SN}} ]_{\mr{SN}} & =[[\alpha,\beta]_{\mr{SN}}, \gamma  ]_{\mr{SN}} +  (-1)^{ ( |\alpha|-1)(|\beta|-1) }   [\beta,[\alpha,\gamma ]_{\mr{SN}} ]_{\mr{SN}}.
\end{align*}

The Schouten--Nijenhuis bracket can be extended $t$-linearly to a graded Lie bracket on $\PV(X)\llb t \rrb$.  One can check $\del$ is a shifted derivation of $[-,-]_{\mr{SN}}$ in that \[\del [\mu,\nu]_{\mr{SN}} = [\del \mu,\nu]_{\mr{SN}} + (-1)^{|\mu|-1}[\mu, \del\nu]_{\mr{SN}}.\] As such, the Schouten--Nijenhuis bracket equips  $\mc E_{\mr {BCOV}}[-1]$ with the structure of a differential graded Lie algebra. 

Recall that a dg Lie structure or more generally an $L_\infty$-structure on a cochain complex $\mc E[-1]$ describes the Taylor components of an odd, square zero vector field on the formal moduli problem $\mc E$. Now if $\mc E$ describes a classical BV theory, this vector field in addition preserves the $(-1)$-shifted symplectic structure on $\mc E$. As symplectic vector fields on formal moduli problems are necessarily Hamiltonian, there is a local functional $S$ on $\mc E$, such that the vector field acts as $\{S, -\}$ where $\{-,-\}$ denotes the shifted Poisson bracket; then $S$ is the action functional of the theory. 

For Poisson BV theories like BCOV theory, the situation is more intricate. The vector field on $\mc E_{\mr{BCOV}}(X)$ corresponding to the dg Lie structure given by $(Q, [-,-]_{\mr{SN}})$ is not Hamiltonian. However, it is shown by Costello--Li \cite[Section 6]{costello2012quantum} that its pushforward under a transcendental automorphism is. Explicitly, there is a nontrivial $L_\infty$-map exhibiting an equivalence between the DG Lie structure $(Q=\delbar +t \del, [-,-]_{\mr{SN}} )$ on $\PV(X)\llb t\rrb [2]$ and an $L_\infty$-structure determined by a vector field of the form $Q+ \{I_{\mr{BCOV}},-\}$.  Here the ``interaction term'' $I_{\mr{BCOV}}$ is a local functional $I_{\mr{BCOV}}$ of cohomological degree $6-2d$ on $\mc E_{\mr{BCOV}}(X)$ that satisfies the classical master equation $QI + \frac{1}{2}\{I,I\}=0$. To define it, we first introduce the notation $\langle -\rangle_0 \colon \sym(\PV(X)\llb t \rrb ) \to \PV(X)$ for the descendent integral given by 
\[  \langle t ^{k_1}  \mu_1, t^{k_2} \mu_2,\cdots, t^{k_n}\mu_n\rangle_0 = \left(\int _{\ol{\mc M}_{0,n}} \psi_1^{k_1}\cdots \psi_n^{k_n} \right)  \mu_1 \cdots \mu_n = \binom{n-3}{k_1,k_2,\cdots,k_n} \mu_1 \cdots \mu_n. \]
Then one can succinctly write
\[I_{\mr{BCOV}}(\mu) = \tr \langle e^\mu\rangle_0.\]

We summarize the structures we have exposited in a definition.
\begin{defn}
A \textit{BCOV Theory} on a Calabi--Yau $d$-fold $X$ is the Poisson BV theory determined with:
\begin{itemize}
\item underlying sheaf of cochain complexes $(\PV(X)\llb t\rrb[2], Q =\ol\del + t \del)$
\item shifted Poisson structure given by the kernel $(\del\otimes 1)\delta_{\mr{diag}}$ with associated Poisson bracket $\{-,-\}$
\item an $L_\infty$-structure induced from the vector field $Q+\{I_{\mr{BCOV}},-\}$.
\end{itemize}
\end{defn}

\subsection{Minimal BCOV theory}\label{minbcov}

As BCOV Theory is equipped with a shifted Poisson structure from the kernel $(\del\otimes 1)\delta_{\mr{diag}}$, the Poisson structure pairs components of $\PV^{i,\bullet}(X)$ with $\PV^{d-i-1,\bullet}(X)$. In particular, the Poisson tensor has a very large degeneracy locus - it includes any field with a positive power of $t$. One would like to restrict to a subspace on which the Poisson structure is nondegenerate, but naively throwing away all fields in the degeneracy locus would not yield a cochain complex. Indeed, as the differential $\ol{\del} +t\del$ increases degree in $t$ the result would not be closed under the differential. This motivates the following definition
 
\begin{defn}\label{def:minbcov}
A \textit{minimal BCOV theory} $\mc E_{\mr{mBCOV}}(X) \subset \mc E_{\mr {BCOV}}(X)$ is defined by asking the space of fields to be given by the sheaf of subcomplexes \[\mc E_{\mr{mBCOV}}(X)=\left( \bigoplus_{i+ k\leq d-1}t^{k}\PV^{i,\bullet}(X), \ol{\del}+t\del\right)  \]
and setting $I_{\mr{mBCOV}}=I_{\mr{BCOV}}|_{\mc E _{\mr{mBCOV}}}$. 
\end{defn}

\begin{ex}[Minimal BCOV theory for Calabi--Yau 3-folds]
One has the space of fields
\[\xymatrix @R=0.5em {
 \ul{-2} &   \ul{-1} &   \ul{0}&   \ul{1}&   \ul{2}\\
 \PV^{0,\bullet}(X) \\
& \PV^{1,\bullet}(X) \ar[r]^-{t\del } & t\PV^{0,\bullet}(X)\\
 &&\PV^{2,\bullet}(X)  \ar[r]^-{t\del }  & t\PV^{1,\bullet}(X) \ar[r]^-{t\del }  & t^2 \PV^{0,\bullet}(X)
}\]
with the action functional given by
\begin{align*}
 I_{\mr{mBCOV}} &=   \tr \left\langle  \left(\alpha^0 \mu^1 \eta^2 + \frac{1}{6} \mu^1\mu^1\mu^1  + \frac{1}{2} \alpha^0 \alpha^0 \eta^2 (t \eta^1) + \frac{1}{2} \alpha^0\alpha^0 \mu^1 \eta^2 (t^2 \eta^0) \right) e^{t \mu^ 0}   \right\rangle_0  \\
&=  \sum_{k\geq 0 } \tr   \left( \alpha^0 \mu^1 \eta^2 (\mu^0)^k+ \frac{1}{6}\mu^1\mu^1\mu^1   (\mu^0)^k \right)	 + \tr \left\langle  \left( \frac{1}{2} \alpha^0 \alpha^0 \eta^2 (t \eta^1) + \frac{1}{2} \alpha^0\alpha^0 \mu^1 \eta^2 (t^2 \eta^0) \right) e^{t \mu^ 0}   \right\rangle_0
\end{align*}
where $\alpha^0 \in \PV^{0,\bullet}(X)$, $(\mu^1, t \mu^0)\in \PV^{1,\bullet}(X) \oplus  t \PV^{0,\bullet}(X)$, and $(\eta^2, t \eta^1, t^2 \eta^0) \in \PV^{2,\bullet}(X) \oplus t \PV^{1,\bullet}(X) \oplus t^2 \PV^{0,\bullet}(X)$.
\end{ex}

\subsection{Introducing potentials}

Now we introduce variants of minimal BCOV theory for $d\geq 3$. Let $2 \leq k \leq d-1$. The motivation behind our variant is to replace the summand $\left(\bigoplus_{i+ j= k}t^{i}\PV^{j,\bullet}(X) , \ol\del +t\del \right)$, which resolves divergence free holomorphic $k$-polyvectors, with a resolution of the complex of holomorphic $(k+1)$-polyvectors considered up to a total divergence.

 That is, consider the natural cochain map 
 \[
\del \colon \left(\xymatrix{t^{k-d+1}\PV^{d,\bullet}(X)\ar[r] & \cdots \ar[r] & \PV^{k+1,\bullet}(X)  }   \right)  \to \left(\xymatrix{\PV^{k,\bullet}(X) \ar[r] & \cdots \ar[r] & t^{k} \PV^{0,\bullet}(X) }  \right).
\]
Note that the source of this map has negative degrees of the parameter $t$; the differential is again $t\del$. In the theory we wish to describe, $\PV^{k+1,\bullet}(X)$ will live in the same cohomological degree as $\PV^{k,\bullet}(X)$, with parity differing from that of a usual element of polyvector fields. The interactions will be given by pulling back $I_{\mr{mBCOV}}$ along this map.

\begin{defnprop}\label{def:bcovpot}
Let $X$ be a Calabi--Yau $d$-fold, and fix $2\leq k \leq d-1$ such that when $d$ is odd, $k\neq \frac{d-1}{2}$. \textit{Minimal BCOV theory on $X$ with $k$-potentials} $\wt{\mc E}^k_{\mr{mBCOV}}(X)$ is the Poisson BV theory with:
\begin{itemize}
\item underlying space of fields given by the sheaf of cochain complexes
\[
\left( \bigoplus_{0\leq i \leq d-k-1}t^{-i}\PV^{k+i+1,\bullet}(X)[2-k+i] \oplus \bigoplus_{\substack{i+j\leq d-1\\ i+j\neq k}}t^{i}\PV^{j,\bullet}(X), \ol{\del}+t\del\right)
\]
\item shifted Poisson structure given as follows: the pairing between the summands $\oplus_{0\leq i \leq d-k-1}t^{-i}\PV^{k+i+1,\bullet}(X)$ and $\oplus_{0\leq i\leq d-k-1} t^i \PV^{d-k-i-1,\bullet}(X)$ is non-degenerate and is given by the trace map $\operatorname{Tr}$. On all other summands, the shifted Poisson structure is given by the bivector $(\del \otimes 1 )\delta_{\mr{diag}}$.

\item an action functional given by $\wt I_{\mr{mBCOV}} = \Phi^* I_{\mr{mBCOV}} $, where $\Phi : \wt{\mc E}^k_{\mr{mBCOV}}(X) \to \mc E_{\mr{mBCOV}}(X)$ is the map given by extending the map $\del$ by the identity on all other summands.
\end{itemize}
\end{defnprop}

\begin{proof}
Note that the map $\Phi$ is trivially a Poisson map. Hence $\wt I_{\mr{mBCOV}}$ satisfies the classical master equation as $I_{\mr{mBCOV}}$ does.
\end{proof}

\begin{rmk}
The case where $d$ is odd and $k =\frac{d-1}{2}$ was thrown out by hand because such fields pair with themselves in minimal BCOV theory. In this case, the sheaf of complexes $\wt{\mc{E}}^k_{\mr{mBCOV}}(X)$ can still be equipped with the structure of a pre-symplectic BV theory \cite{presymplecticbv}.  We expect that theorem \ref{thm:bcov_d}, which is a result about the underlying $L_\infty$ structure, will still hold in that case.
\end{rmk}

Such theories appear in the context of twists of supergravity, see \cite{SuryaYoo}, \cite{spinortwist}, \cite{Raghavendran_2023}, and in particular section 8.2 of \cite{CostelloLi} for motivation.

\section{Minimal models and extensions of $\SHO(d|d)$}

In the Batalin--Vilkovisky formalism, the parity shifted space of fields always has the structure of an $L_\infty$-algebra. Moreover, given an $L_\infty$-algebra, one can ask for a so-called minimal model, which is a quasi-isomorphic $L_\infty$-algebra where the differential is zero. Minimal models can be computed by way of homotopy transfer \cite{operadsBook} and physically describe tree-level scattering amplitudes in the theory \cite{perturbiner}, \cite{wolf}.

In this paper, we wish to compute minimal models of minimal BCOV theory and some of its variants. We claim that the results can be described in terms of extensions of the infinite-dimensional simple Lie superalgebra $\SHO(d|d)$. Following \cite{ChengKac},  $\SHO(d|d)$ may naturally be constructed as a Lie subalgebra of the Lie algebra $\Vect(\C^{d|d})$ of vector fields on the super-vector space $\C^{d|d}$. Choosing holomorphic coordinates $x_i, \xi_j$ on $\C^{d|d}$, we may write a vector field $\mu \in \Vect (\C^{d|d} )$ as 
\[
\sum_{i=1}^d \mu_{x_i}\frac{\del}{\del x_i} + \sum_{i=1}^d \mu_{\xi_i} \frac{\del}{\del \xi_i}
\]
where $\mu_{x_i}, \mu_{\xi_j}\in \mc O (\C^{d|d} )$. 

Note that $\C^{d|d}$ also comes equipped with the following two pieces of structure:
\begin{itemize}
\item It possesses a canonical odd-shifted symplectic form -- in terms of the coordinates we have chosen, it can be written as $\omega = \sum_{i=1}^d d x_i \wedge d \xi_i$. 

\item There is a canonical volume form on $\C^{d|d}$. Using this, we can define a super-divergence operator
\[D : \Vect (\C^{d|d}) \to \mc O (\C^{d|d})\]
which, in terms of the coordinates we have chosen, is given by 
\[D \mu = \sum_{i=1}^{d} \frac{\partial \mu_{x_i}} {\partial x_i} + \sum _{i=1}^{d} (-1)^{|\mu_{\xi_i} | }\frac{\partial \mu_{\xi_i}}{\partial \xi_i}\]
\end{itemize}

In terms of this data, the following Lie superalgebras appear in Kac's classification:

\begin{defn}
\hfill 
\begin{itemize}
\item Let $\operatorname{HO}(d|d) = \{ \mu \in \Vect ( \C^{d|d} ) \mid  \mc L_\mu \omega = 0 \}$ denote the Lie superalgebra of symplectic vector fields on $\C^{d|d}$.
\item Let $\SHO^\prime (d|d) = \{ \mu \in \operatorname {HO} (d|d) \mid D\mu = 0\} $ denote the sub Lie superalgebra consisting of those symplectic vector fields that are also super-divergence-free. 
\item Let $\SHO (d|d)$ denote the derived subalgebra of $\SHO^\prime (d|d)$. 
\end{itemize}
\end{defn}

\begin{note}\label{strat}
In proposition\ref{prop:sho} and theorems \ref{thm:bcov_d}, \ref{thm:bcov_k}, we compute minimal models for the $L_\infty$-algebras determined by definitions \ref{def:minbcov}, \ref{def:bcovpot} respectively. These results all follow the same general pattern, so we record the steps here to help orient the reader. Let $(\Pi \mc E, Q)$ be one of the $L_\infty$-algebras appearing in definitions. The minimal model is computed using the following steps

\begin{itemize}
	\item [\circled{1}] Compute $H^\bullet \left ( \Pi \mc E(\C^d); Q \right )$ as a graded vector space. Note that the complex $(\Pi \mc E(\C^d), Q)$ is naturally the totalization of a bicomplex where one of the differentials is the $\delbar$ operator, so we may compute cohomology by way of the associated spectral sequence. 
		
	\item [\circled{2}] Compute the $L_\infty$-structure on $H^\bullet \left (\Pi \mc E (\C^d); Q \right )$ by way of homotopy transfer. To do so, we make use of the results of \cite{Cirici2022} which combines spectral sequences with homotopy transfer. Indeed, their results give a sequence of $L_\infty$-structures on each page of the associated spectral sequence, which are all related by homotopy transfer. A minimal model for $(\Pi\mc E(\C^d), Q)$ is gotten by the $L_\infty$-structure on the page where the spectral sequence degenerates.
		
In each case, we will find that there is no nontrivial transfer onto the $E_1$-page, and to contemplate transfer onto the $E_2$-page, we will need to find explicit homotopy data and analyze the possibly nonzero diagrams. 

	\item [\circled{3}] Find an isomorphism between $H^\bullet \left ( \Pi \mc E(\C^d); Q \right )$ with the transferred $L_\infty$-structure and an extension of $\SHO(d|d)$.
\end{itemize}
\end{note}

\subsection{Minimal BCOV theory and $\SHO(d|d)$}

\begin{prop}\label{prop:sho}
The minimal model for minimal BCOV theory $\Pi\mc E_{\mr{mBCOV}} (\C^d)$ is a one-dimensional odd central extension of $\SHO (d | d)$.
\end{prop}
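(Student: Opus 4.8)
The plan is to run the three-step strategy of Note~\ref{strat} on the bicomplex $(\Pi\mc E_{\mr{mBCOV}}(\C^d),\,\delbar+t\del)$, first taking $\delbar$-cohomology and then $t\del$-cohomology. For the $E_1$-page, I would use that $\C^d$ is Stein, so each Dolbeault complex $\PV^{i,\bullet}(\C^d)$ resolves the holomorphic $i$-polyvector fields, concentrated in $j=0$. Hence, as a complex,
\[
E_1=\Big(\bigoplus_{i+k\leq d-1}t^k\,\PV^{i}_{\mr{hol}}(\C^d),\ t\del\Big),
\]
where, under the identification $\bigoplus_i\PV^{i}_{\mr{hol}}(\C^d)\cong\mc O(\C^{d|d})$ sending $\partial_{x_a}\mapsto\xi_a$, the divergence $\del$ becomes the holomorphic BV Laplacian $\sum_{a}\partial_{x_a}\partial_{\xi_a}$.

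For the $E_2$-page, note that $t\del$ preserves $n:=i+k$, so $E_1$ splits as a direct sum over $n\leq d-1$ of the truncations $\PV^{n}_{\mr{hol}}\to\cdots\to\PV^{0}_{\mr{hol}}$ of the Koszul complex of $\del$. A one-variable computation together with a Künneth argument shows the full Koszul complex on $\mc O(\C^{d|d})$ has cohomology $\C\cdot\xi_1\cdots\xi_d$ in top polyvector degree and is exact below it; consequently each truncated complex has cohomology only at its top, and
\[
E_2=\bigoplus_{n=0}^{d-1}\PV^{n}_\del(\C^d),
\]
the divergence-free holomorphic polyvector fields, all placed at $t$-power $k=0$. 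This already pins down the underlying graded vector space: $\bigoplus_{n<d}\PV^n_\del$ is closed under $[-,-]_{\mr{SN}}$, since on divergence-free classes $[H,K]_{\mr{SN}}=\pm\del(HK)$ and $HK$ vanishes once $\deg_\xi H+\deg_\xi K>d$, so the bracket never leaves polyvector degrees $<d$. Modding out the constants $\C\subset\PV^0_\del$ recovers $\SHO'(d|d)\cong\ker\del/\C$ with its top class $[\xi_1\cdots\xi_d]$ removed, which is exactly $\SHO(d|d)$ by \cite{ChengKac}.

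It remains to compute the transferred $L_\infty$-structure and to match it with this Lie bracket. For transfer onto $E_1$ I would take $\delbar$-harmonic (holomorphic) representatives; because the vertices of $I_{\mr{mBCOV}}$ are holomorphic polydifferential operators, any tree carrying an internal $\delbar$-propagator feeds a $\delbar$-exact input into the next vertex and hence vanishes, so $E_1$ inherits the naive structure with no correction, in line with \cite{Cirici2022}. The transfer onto $E_2$ is where the content lies: I would fix an explicit contracting homotopy for the Koszul differential $t\del$ and analyze the resulting trees of cubic vertices. The crucial simplification is that every $E_2$-representative sits at $k=0$, so all positive powers of $t$ in $I_{\mr{mBCOV}}$---in particular the descendent factors $t^{j}\eta$ and the exponential $e^{t\mu^0}$---drop out, leaving only the $t$-independent cubic vertex, which produces precisely $\ell_2=[-,-]_{\mr{SN}}$.

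The main obstacle I expect is verifying that the higher brackets vanish, i.e.\ that every tree with at least two cubic vertices joined by internal Koszul-homotopy propagators evaluates to zero on $k=0$ inputs, so that $\ell_{\geq 3}=0$ and $(E_2,\ell_2)$ is a strict Lie algebra. This is a direct but careful diagrammatic check: the contracting homotopy strictly raises the $t$-power while the external legs are forced to $k=0$, which should obstruct any nonzero evaluation. Granting this, $(E_2,\ell_2)$ is the Lie algebra $\bigoplus_{n<d}\PV^n_\del(\C^d)$ with bracket $\del(HK)$, and the isomorphism $\partial_{x_a}\leftrightarrow\xi_a$ identifies the short exact sequence $0\to\C\cdot 1\to E_2\to\SHO(d|d)\to 0$. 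The class of the constant is central because $[1,-]_{\mr{SN}}=\pm\del(-)=0$ on divergence-free classes, and it is odd by the degree count ($1\in\PV^0_\del$ sits in odd degree in $\Pi\mc E$), which exhibits the minimal model as a one-dimensional odd central extension of $\SHO(d|d)$.
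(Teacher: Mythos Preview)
Your proposal follows the same three-step strategy as the paper and arrives at the same answer, but the step you flag as the ``main obstacle'' contains a sign error and misses the observation that makes it effortless. The contracting homotopy for $t\del$ \emph{lowers} the $t$-power, not raises it: the paper takes $H=t^{-1}K$ on summands $t^i\PV^j$ with $i\neq 0$ and $H=0$ on $t^0\PV^j$. The clean reason multi-vertex trees vanish is then not a ``careful diagrammatic check'' but a one-line structural fact you do not articulate: the shifted Poisson tensor $(\del\otimes 1)\delta_{\mr{diag}}$ vanishes on all descendants, so \emph{no} $L_\infty$-bracket of any arity can output a field with a positive power of $t$. Since $H$ is zero on $t^0$, the composite $H\circ\ell_n$ is identically zero and every tree with an internal edge dies immediately. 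The same observation also finishes off the higher-arity single vertices: a quartic or higher term in $I_{\mr{mBCOV}}$ necessarily carries a descendant in one slot, and since that slot cannot be the output, it must be an input---but the image of $\iota$ contains no descendants.

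Two smaller remarks. Your $E_1$-transfer argument (holomorphic polydifferential vertices kill $\delbar$-exact inputs) is not quite the paper's; the paper instead observes that the $L_\infty$-structure is a tensor product with the cdga $\Omega^{0,\bullet}(\C^d)$, so any nontrivial transfer would have to come from higher $C_\infty$-operations on $\mc O(\C^d)$, which vanish for form-type reasons. And in step~\circled{3}, the paper makes the identification with $\SHO(d|d)\oplus\Pi\C$ via the commuting square relating the super-divergence $D$ on $\operatorname{HO}(d|d)$ to the usual divergence $\del$ on $\mc O(\C^{d|d})$, rather than the direct Koszul description you sketch; the content is the same.
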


\begin{proof}

\circled{1} We first compute $H^\bullet \left ( \mc E_{\mr{mBCOV}} (\C^d); Q \right )$ as a graded vector space. This cohomology can be computed by way of a spectral sequence whose first page is the cohomology with respect to the Dolbeault differential. By the $\delbar$-Poincare lemma, the first page of this spectral sequence is given by the graded vector space $\oplus_{i+j\leq d-1} t^i\PV^{j} (\C^d)$. The differential on the first page is the divergence operator $\del$, and we see that the cohomology with respect to the totalized differential is given by $\oplus_{i\leq d-1} \PV^i_\del (\C^d)$. 

\circled{2} 
We now wish to find homotopy data to compute the transferred $L_\infty$-structure on $H^\bullet \left ( \Pi  \mc E_{\mr{mBCOV}} (\C^d); Q \right )$. We organize our computation according to the spectral sequence used in the previous step. We claim that there can be no nontrivial homotopy transfer onto the first page. Indeed, note that the $L_\infty$-structure on $\mc E_{\mr{mBCOV}}$ comes from the tensor product of an $L_\infty$-algebra and the natural cdga structure on $\Omega^{0,\bullet} (\C^d)$. Any transfer of $L_\infty$-structure onto the first page of this spectral sequence must therefore be induced by a transfer of a $C_\infty$-structure onto $H^\bullet (\Omega^{0,\bullet}(\C^d); \delbar)\cong \mc O (\C^d)$. However, no higher brackets can appear for form-type reasons. Therefore, we need only consider transfer onto the $\del$-cohomology.

We claim that the following is a homotopy datum:
\[
\begin{tikzcd}
\ar [loop left]{l}{H} \left (\Pi \bigoplus_{i+j\leq d-1} t^i\PV^{j} (\C^d), t\del \right )\ar [r, shift left, "p"] &H^\bullet \left ( \mc E_{\mr{mBCOV}} (\C^d); Q \right ) \ar[l, shift left, "\iota"] \: ,
\end{tikzcd}
\]
\begin{itemize}
\item  For $1\leq j \leq d-1$, let \[K^j: \PV^{j} \to \PV^{j+1}\] be such that $\del K^j + K^{j-1} \del =\operatorname{id}$. The explicit form of this operator will not be needed; such an operator exists by the proof of the holomorphic Poincare lemma. We will further require that $K^{d-1} : \PV^{d-1} \to \PV^d$ produces a top polyvector field with vanishing constant term; this is always possible. We define the homotopy $H$ to act as zero on summands of the form $t^0 \PV^j$ and as $t^{-1}K$ on all other summands. Note that for those descendants of the form $t^k \PV^0$ for $k\neq 0$, only the first term in this relation contributes.

\item The projection $p$ is defined as follows
\[ p(t^i\mu^j) = \begin{cases}
  0&  \text{if }i\neq 0\\
 \mu^j - H (t\del) \mu^j & \text{if }i=0, \ j\neq 0\\
  \mu^0  & \text{if }i=j=0	
 \end{cases}
 \]
To see that these assignments land in $\oplus_{i\leq d-1} \PV^i _\del(\C^d)$, it is enough note that
\[ 
\del (\mu^j - H (t\del) \mu^j) = \del \mu^j - \del \mu^j + K\del^2 \mu^j = 0.
\]
\item The operator $\iota$ is simply the inclusion $\oplus_{i\leq d-1} t^0\PV^i_\del (\C^d)\to \oplus_{i\leq d-1} t^0\PV^i (\C^d)$.
\end{itemize}

We claim that the above operators satisfy the desired relations. It is clear that $p\iota =\operatorname{id}$. For the relation $\operatorname{id} - \iota p = (t\del) H + H(t\del)$ there are three cases to check. 
\begin{itemize}
\item First consider elements of the form $t^i\mu^j \in t^i \PV^{j}(\C^d)$ for $i\neq 0$. The desired relation then obviously follows because of how the operator $K$ was chosen.

\item Next consider elements of the form $\mu^j \in t^0 \PV^{j} (\C^d)$ for $j\neq 0$. We have that 
\[
(\operatorname{id} - \iota p) (\mu^j) = \mu^{j} - \mu^{j} + H(t\del)\mu^j = K\del \mu^j
\]
Similarly, $H$ vanishes on elements in $t^0 \PV^{j} (\C^d)$ so we have that $(t\del H + H t\del)\mu^j = K\del \mu^j$. 

\item Finally consider elements of the form $\mu^0 \in t^0 \PV^{0} (\C^d)$. We have that
$(\operatorname{id} - \iota p) \mu^0 = 0$. Similarly, $(t\del H + H t\del)\mu^0$ because $H$ acts as zero on such elements. 

\end{itemize}
This completes the check that the above assignments comprise valid homotopy data.

Now we analyze which diagrams can possibly be nonzero. Recall that the $L_\infty$-brackets on $\mc E_{\mr{mBCOV}}$ come from the Taylor components of the vector field $Q + \{I_{\mr{mBCOV}},- \}$. The fact that the shifted Poisson tensor vanishes on descendants simplifies the kinds of brackets that can appear. For instance, no bracket can output a descendant, which readily implies that any nonzero diagram cannot involve any homotopies. 

Therefore, the only nonzero diagrams consist of a single vertex -- we claim that it in fact must be cubic. Indeed, a vertex of quartic or higher order involves a bracket which necessarily takes a descendant as input, but the image of $\iota$ does not involve any descendants. 

Thus we see that the cohomology is simply an ordinary Lie superalgebra. The Lie bracket given by nonzero diagrams is determined by the cubic term in the action and is given by  $[\alpha, \beta] = \del (\alpha \wedge \beta)$ up to a sign. Because $\alpha, \beta\in \ker \del$, this bracket coincides with the Schouten--Nijenhuis bracket.

\circled{3} To finish the proof, we want to identify an explicit isomorphism between $H^\bullet \left ( \mc E_{\mr{mBCOV}} (\C^d); Q \right )$ and $\SHO (d|d) \oplus \Pi \C$ as Lie superalgebras. This step is essentially a reproduction of an alternative characterization of $\SHO (d|d)$ following the discussion in Section 1.3 of \cite{ChengKac}

Note that the odd-symplectic structure on $\C^{d|d}$ naturally equips $\mc O(\C^{d|d})$ with the structure of an odd-shifted Poisson algebra, hence equipping $\Pi \mc O (\C^{d|d} )$ with a Lie superalgebra structure. In fact, we may identify $\mc O (\C^{d|d}) \cong \PV (\C^d )$, and doing so, this Lie superalgebra structure is exactly the Schouten--Nijenhuis bracket. 

Accordingly, there is a short exact sequence of Lie superalgebras
\[
0 \to \C \to \Pi \mc O( \C^{d|d} ) \to \operatorname{HO}(d|d ) \to 0
\]
which is a super-version of the familiar short exact sequence expressing Hamiltonian vector fields on a manifold with trivial $H^1$ as a central extension of symplectic vector fields. Explicitly, the map $\Pi \mc O(\C^{d|d} ) \to \operatorname{HO} (\C^{d|d} )$ is given by $f  \mapsto \Theta (df, - )$ where $\Theta$ and $d$ denote the canonical odd-Poisson tensor and de Rham differential on $\C^{d|d}$ respectively. In terms of the coordinates we have chosen, the map is given by \[f (x_i, \xi_j) \mapsto \sum_{i=1}^d \frac{\del f}{\del x_i} \frac{\del}{\del \xi_i} + (-1)^{|f|} \frac{\del f}{\del \xi_i} \frac{\del}{\del x_i}.\] 
Choosing a splitting, we may therefore identify $\operatorname{HO}(d|d) \oplus \C \cong \Pi  \PV (\C^d)$. 

Now note that there is a commutative diagram with exact rows and columns
\[
\begin{tikzcd}
 &  & \Pi \mc O(\C^{d|d} )\ar[equal]{r}  & \Pi \mc O (\C^{d|d} )  & \\
0\ar[r]  & \C\ar[r]  & \Pi \mc O(\C^{d|d} )\ar[r]\ar[u, "\del"]  & \operatorname{HO}(d|d)\ar[r]\ar[u, "D"] & 0 \\
0\ar[r]  & \C\ar[r]\ar[equal]{u}  & \ker \del\ar[r]\ar[u]  & \operatorname{SHO}^\prime (d|d)\ar[r]\ar[u] & 0 \\
& & 0 \ar[u] & 0\ar[u]& \\
\end{tikzcd}
\]
where the commutativity of the top square is established via an explicit, local computation. In particular, under the identification $\mc O (\C^{d|d} ) \cong \PV(\C^d)$, divergence-free polyvector fields on $\C^d$ are exactly those symplectic vector fields on $\C^{d|d}$ with vanishing super-divergence, up to constants. 

Consequentially, a choice of splitting lets us identify $\operatorname{SHO}^\prime (d|d) \oplus \C \cong \ker \del$. The derived subalgebra $\operatorname{SHO}(d|d)$ consists of those elements of $\operatorname{SHO}^\prime (d|d)$ not containing the monomial $\xi_1\cdots \xi_d$. The cocycle exhibiting $\ker \del$ as a one-dimensional central extension of $\operatorname{SHO}^\prime (d|d)$, pulled back to the derived subalgebra, now exhibits $H^\bullet \left ( \mc E_{\mr{mBCOV}} (\C^d); Q \right )$ as a one-dimensional central extension of $\operatorname{SHO}(d|d)$.

\end{proof}

\subsection{Extensions from potentials}

In the case $k=d-1$, minimal BCOV theory with $k$-potentials on $\C^d$ has an ordinary Lie superalgebras as a minimal model.

\begin{thm}\label{thm:bcov_d}
Minimal BCOV theory with $(d-1)$-potentials $\Pi \wt {\mc{E}}^{d-1} _{\mr{mBCOV} }(\C^d)$ is $L_\infty$-equivalent to an extension of $\SHO(d|d)$ by:
\begin{itemize}
\item $\Pi \C^2$ when $d$ is odd.
\item $\C^{1|1}$ when $d$ is even.
\end{itemize}
\end{thm}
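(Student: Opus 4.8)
The plan is to follow the three-step strategy of Note~\ref{strat}, bootstrapping off Proposition~\ref{prop:sho}. The point is that $\wt{\mc E}^{d-1}_{\mr{mBCOV}}(\C^d)$ differs from $\mc E_{\mr{mBCOV}}(\C^d)$ only in its top layer: the resolution $\PV^{d-1,\bullet}\to\cdots\to t^{d-1}\PV^{0,\bullet}$ of divergence-free $(d-1)$-polyvectors is replaced by the single potential summand $\PV^{d,\bullet}[3-d]$, and the two theories are intertwined by the strict map $\Phi=\del$ of Definition/Proposition~\ref{def:bcovpot}. So the whole computation should reduce to understanding the effect of this one replacement.

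For step \circled{1} I would run the same spectral sequence as in Proposition~\ref{prop:sho}: the $\delbar$-Poincaré lemma collapses each Dolbeault resolution to its holomorphic part, leaving $t\del$. On the summands with $i+j\le d-2$ this is verbatim the minimal BCOV computation and yields $\bigoplus_{\ell\le d-2}\PV^\ell_\del(\C^d)$. On the potential summand $t\del$ lands in $t\PV^{d-1,\bullet}$, which is no longer part of the theory, so the differential there is just $\delbar$ and the potential contributes all holomorphic top polyvectors $\PV^d(\C^d)$. Using $0\to\C\to\PV^d(\C^d)\xrightarrow{\del}\PV^{d-1}_\del(\C^d)\to0$, I would identify $H^\bullet(\wt{\mc E}^{d-1}_{\mr{mBCOV}}(\C^d);Q)$ with the minimal BCOV cohomology enlarged by a single extra class $e=\ker\del\subset\PV^d$ — the reinstated top monomial $\xi_1\cdots\xi_d$ that was discarded in passing from $\SHO^\prime(d|d)$ to its derived subalgebra.

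The new feature in step \circled{2}, and the step I expect to be the main obstacle, is the potential's non-degenerate trace pairing: the shortcut from Proposition~\ref{prop:sho} — that no bracket can output a descendant, so no homotopies enter and only the cubic vertex survives — still controls every output except the new direction $e$, which is dual to $\PV^0$ under the trace and is invisible to the strict map $\Phi=\del$ (it lies in $\ker\del$). My strategy is twofold. First, since $\Phi$ is a strict morphism of the shifted Lie algebras, homotopy transfer produces an $L_\infty$-morphism of minimal models whose linear part is $\del_\ast$ — the identity on the $\PV^\ell_\del$ with $\ell\le d-2$ and the surjection $\PV^d\twoheadrightarrow\PV^{d-1}_\del$ with kernel $\langle e\rangle$ — so that modulo $e$ the transferred structure is exactly the ordinary Lie superalgebra of Proposition~\ref{prop:sho}. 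Second, I must separately rule out transferred higher operations valued in $\langle e\rangle$, which are precisely the ones $\Phi$ cannot detect; I would do this by the diagrammatic analysis of Proposition~\ref{prop:sho}, now tracking the trace pairing against $\PV^0$, since an $n$-ary bracket landing in $e$ would require a vertex pairing $n$ holomorphic inputs with a descendant, which the support of $\wt I_{\mr{mBCOV}}$ forbids for $n\ge 3$. Granting this, the minimal model is an ordinary Lie superalgebra and a two-dimensional central extension of the minimal BCOV model; centrality of $e$ is immediate since $e$ is a constant top polyvector and every Schouten bracket $\del(e\wedge-)$ vanishes for degree reasons.

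Finally, for step \circled{3} I would assemble the identification with an extension of $\SHO(d|d)$, reusing the dictionary $\Pi\mc O(\C^{d|d})\cong\Pi\PV(\C^d)$ of Proposition~\ref{prop:sho}. The base is $\SHO(d|d)$, the first central generator is the odd class $c\in\PV^0_\del$ produced there, and the second is the potential class $e$. The only genuinely $d$-dependent input — and the delicate point on which the dichotomy rests — is the parity of $e$: I would track the cohomological degree assigned to the top-polyvector potential by the shift $[3-d]$ of Definition/Proposition~\ref{def:bcovpot}, remembering the convention recorded there that the potential element carries the parity opposite to a usual polyvector field. Since $3-d$ changes parity with $d$, this bookkeeping is what makes $e$ odd precisely when $d$ is odd; combined with the always-odd generator $c$, the two-dimensional center is then $\Pi\C^2$ for odd $d$ and $\C^{1|1}$ for even $d$, as claimed.
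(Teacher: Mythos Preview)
Your outline is very close to the paper's proof and has all the right ingredients, but there is one genuine gap in step \circled{2}.

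You split the analysis into two prongs: use the strict map $\Phi$ to control the transferred structure ``modulo $e$'', and separately run the diagrammatic analysis only for operations landing in $\langle e\rangle$. The problem is with the first prong. Knowing that $\Phi$ induces an $L_\infty$-morphism of minimal models whose linear part is $\del_\ast$ does \emph{not} let you conclude that the transferred brackets on $H^\bullet(\Pi\wt{\mc E}^{d-1}_{\mr{mBCOV}})$, taken modulo $e$, coincide with those on $H^\bullet(\Pi\mc E_{\mr{mBCOV}})$. The transferred morphism can have nontrivial higher components, and these can absorb higher brackets in the source that are not valued in $\ker\del_\ast$. Equivalently: being $L_\infty$-equivalent to a strict Lie superalgebra after quotienting by $\langle e\rangle$ does not mean the quotient \emph{is} a strict Lie superalgebra before you change coordinates. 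So the $\Phi$ argument alone does not rule out higher brackets outside $\langle e\rangle$.

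The fix is simply to run the diagrammatic argument of Proposition~\ref{prop:sho} uniformly for \emph{all} outputs, not just the $e$-valued ones: the shifted Poisson tensor still vanishes on positive-$t$ descendants, so no bracket outputs a descendant, the homotopy $H$ vanishes on $t^0$ summands and on $\PV^d$, hence no internal edge survives, and the image of $\iota$ contains no descendants so any single vertex of arity $\ge 4$ dies. This is exactly what the paper does; the map $\Phi$ is then invoked only in step \circled{3}, after the minimal model is already known to be a strict Lie superalgebra, via the short exact sequence $0\to\ker\Phi_\ast\to H^\bullet(\Pi\wt{\mc E})\to H^\bullet(\Pi\mc E)\to 0$ of Lie superalgebras. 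Your argument for centrality of $e$ and the parity bookkeeping are essentially correct (though the bracket formulas on $\PV^d$ are slightly modified from the Schouten form, all of them involve $\del\mu^d$, so $e$ constant still kills them).
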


\begin{proof}
We again follow the strategy outlined in Note \ref{strat}.

\circled{1}
Once again using a spectral sequence whose first page is given by the cohomology with respect to the $\delbar$ operator, we see that 
\[
H^\bullet \left (\Pi \wt {\mc{E}}^{d-1} _{\mr{mBCOV} }(\C^d); Q \right )\cong \PV^d (\C^d)[2-d] \oplus \bigoplus_{i\leq d -2} \PV^i_\del (\C^d)
\]
as graded vector spaces.

\circled{2}
Next, we compute the induced $L_\infty$-structure on $H^\bullet (\Pi \wt {\mc{E}} _{\mr{mBCOV} }(\C^d); Q)$ via homotopy transfer.

We again organize the computation according to the spectral sequence of the previous step. It's clear that there are no higher brackets that can occur in homotopy transfer onto $\delbar$ cohomology for form-type reasons.

We next claim that the following is a homotopy datum:
\[
\begin{tikzcd}
\ar [loop left]{l}{H} \Pi \left (\PV^d (\C^d)[3-d] \oplus \bigoplus_{i+j\leq d-2} t^i\PV^{j} (\C^d), t\del \right )\ar [r, shift left, "p"] & \ar[l, shift left, "\iota"] \ H^\bullet \left (\Pi \wt {\mc{E}}^{d-1} _{\mr{mBCOV} }(\C^d); Q \right ) ,
\end{tikzcd}
\]
\begin{itemize}
\item The homotopy $H$ is once again defined by use of the auxiliary operator $K^i$ used in the proof of Proposition \ref{prop:sho}.  We take $H$ to act as zero if $i = 0$ and as $t^{-1}K$ on all other summands. 
\item The operator $\iota$ acts as the identity on $\PV^{d}(\C^d)$, and on every other summand, is just the natural inclusion $t^0 \PV^i_\del (\C^d) \to t^0 \PV^i (\C^d)$.
\item The operator $p$ is defined as follows
\[ p(t^i\mu^j) = \begin{cases}
  0&  \text{if }i\neq 0\\
 \mu^j - H \del \mu^j & \text{if }i=0, \ j\neq 0, d\\
  \mu^j  & \text{if }i=0, \ j = 0, d	
 \end{cases}
 \]
\end{itemize}
The proof that this constitutes valid homotopy data is identical to the proof of Proposition \ref{prop:sho}, and identical arguments show that the only nonzero diagram is a cubic diagram with a single vertex. There are three different kinds of brackets coming from terms in the action which involve one of $\mu^0$ or $\mu^d$, both of them, or none. 

\begin{enumerate}
\item 
For $a,b$ such that $a, b, a+b \neq 0, d$, there is a bracket
\begin{align*}
\PV^{a}_\del \times \PV^{b}_\del &\to \PV^{a+b-1}_\del \\
(\mu^a,\mu^b) &\mapsto \del(\mu^a \wedge\mu^b)
\end{align*}
coming from terms in the action of the form  $\operatorname{Tr}(\mu^a\mu^b \mu^c)$ where none of $a, b, c$ is $0$ or $d$ yield a bracket given by $\del (\mu^a\wedge \mu^b)$.

\item For $a,b$ such that $a+b= d$, there is a bracket
\begin{align*}
\PV^{a}_\del \times \PV^{b}_\del &\to \PV^{d} \\
(\mu^a,\mu^b) &\mapsto \mu^a \wedge\mu^b
\end{align*}
coming from terms in the action of the form $\operatorname{Tr}(\mu^0\mu^a \mu^b)$ where $a+b = d$.

\item Lastly, there is a bracket of the form 
\begin{align*}
\PV^{1}_\del \times \PV^{d} &\to \PV^{d} \\
(\mu^1,\mu^d) &\mapsto \mu^1 \wedge\del\mu^d
\end{align*}
 coming from the term in the action of the form $\operatorname{Tr}(\mu^0\mu^1 \del \mu^d)$ which comes from pulling back a term in $I_{\rm{mBCOV}}$ involving $\mu^{d-1}$. 
 \end{enumerate}

\circled{3}
Finally, we show that the cohomology $H^\bullet (\wt {\mc{E}} _{\mr{mBCOV} }(\C^d); Q)$ with the induced Lie superalgebra structure is a two-dimensional central extension of $\SHO(d|d)$. 
For the first part, consider the following short exact sequence of Lie algebras
\[
0\to \ker\Phi_*\to  H^\bullet (\Pi \wt {\mc E}_{\mr {mBCOV}}(\C^d); Q) \stackrel{\Phi_*}{\to} H^\bullet (\Pi {\mc E}_{\mr {mBCOV}}(\C^d); Q)\to 0.
\] 
The map $\Phi_*$ is induced from $\Phi$, which is in turn given by the $\del$ operator on the summand given by $\PV^{d,\bullet}$ and the identity elsewhere. It is clear that $\ker \Phi_*$ consists of those elements of $\PV^{d}$ which are constant coefficient; if $d$ is even this sits in even degree and if $d$ is odd it sits in odd degree. Therefore, using the splitting provided by the operator $K^{d-1}|_{\ker \del}$, we have an isomorphism
\[
H^\bullet (\Pi \wt {\mc E}_{\mr {mBCOV}}(\C^d); Q) \cong H^\bullet (\Pi {\mc E}_{\mr {mBCOV}}(\C^d); Q) \oplus (\Pi) \C.
\]

The corresponding cocycle is given by
\begin{align*}
H^\bullet (\Pi {\mc E}_{\mr {mBCOV}}(\C^d); Q)\otimes H^\bullet (\Pi {\mc E}_{\mr {mBCOV}}(\C^d); Q) &\to (\Pi )\C \\
[\alpha]\otimes [\beta] & \mapsto (\alpha\wedge\beta)(0) \vee \Omega
\end{align*}
which just gives the constant term in the coefficient of $\alpha\wedge\beta$. Note that this cocycle is nonzero for $\alpha\in t^0\PV^{i, 0}$ and $\beta\in t^0 \PV^{d-i, 0}$, $i = 1, \cdots, d-1$.

By Proposition \ref{prop:sho} we can identify $H^\bullet (\Pi {\mc E}_{\mr {mBCOV}}(\C^d); Q)\cong \SHO(d|d) \oplus \Pi \C$, so we find that $H^\bullet (\wt \Pi {\mc E}_{\mr {mBCOV}}(\C^d); Q)$ is a one-dimensional central extension of $\SHO(d|d) \oplus \Pi \C$. However, note that the above cocycle is in fact pulled back from a cocycle on $\SHO(d|d)$ -- it does not depend on the odd one-dimensional center, which under the identification with $H^\bullet (\Pi {\mc E}_{\mr {mBCOV}}(\C^d); Q)$, comes from the constants in $\PV^0$. Therefore, we see that 
\begin{itemize}
\item $H^\bullet (\Pi \wt {\mc E}_{\mr {mBCOV}}(\C^d); Q) \cong \SHO(d|d) \oplus \Pi \C^2$ when $d$ is odd
\item $H^\bullet (\Pi \wt {\mc E}_{\mr {mBCOV}}(\C^d); Q) \cong \SHO(d|d) \oplus \C^{1|1}$ when $d$ is even
\end{itemize}
as claimed.
\end{proof}

\begin{thm}\label{thm:bcov_k}
Suppose in addition that $k\neq d-1$. The minimal model for minimal BCOV theory with $k$-potentials $\Pi \wt {\mc{E}}^k _{\mr{mBCOV} }(\C^d)$ is an $L_\infty$-extension of $\Pi\C \oplus \SHO(d|d)$ by:
\begin{itemize}
\item $\Pi \C$ when $d$ is odd
\item $\C$ when $d$ is even
\end{itemize}
The $L_\infty$-structure has quadratic and $(d-k+1)$-ary brackets.
\end{thm}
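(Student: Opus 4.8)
The plan is to follow the three-step strategy of Note \ref{strat}, reusing as much as possible from Proposition \ref{prop:sho} and Theorem \ref{thm:bcov_d}; the one genuinely new ingredient is a higher bracket whose arity is governed by the length of the potential resolution. For step \circled{1} I would run the same $\delbar$-spectral sequence, whose first page the $\delbar$-Poincare lemma identifies with the holomorphic polyvectors $\bigoplus_{0\le i\le d-k-1}t^{-i}\PV^{k+i+1}(\C^d)[2-k+i]\oplus\bigoplus_{i+j\le d-1,\,i+j\ne k}t^i\PV^j(\C^d)$ carrying the divergence differential $t\del$. Since $t\del$ preserves $i+j$ the complex breaks into diagonal strips: each strip $s\ne k$ contributes $\PV^s_\del(\C^d)$ exactly as in Proposition \ref{prop:sho}, while the potential is the strip at $s=k+1$ extended onto the negative-$t$ side, whose cohomology the holomorphic Poincare lemma computes to be $\PV^k_\del(\C^d)$ at its top (detected by $\del$) together with the constant $d$-vectors in $\PV^d$ at its bottom. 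I would package this through $\Phi$: the short exact sequence
\[
0\to\ker\Phi_*\to H^\bullet\!\left(\Pi\wt{\mc E}^k_{\mr{mBCOV}}(\C^d);Q\right)\xrightarrow{\Phi_*}H^\bullet\!\left(\Pi\mc E_{\mr{mBCOV}}(\C^d);Q\right)\to 0
\]
identifies $\ker\Phi_*$ with precisely the constant $d$-vectors, so that as graded vector spaces $H^\bullet(\Pi\wt{\mc E}^k_{\mr{mBCOV}}(\C^d);Q)\cong H^\bullet(\Pi\mc E_{\mr{mBCOV}}(\C^d);Q)\oplus\ker\Phi_*$. A parity count identical to the one in Theorem \ref{thm:bcov_d} places $\ker\Phi_*$ in odd degree when $d$ is odd and in even degree when $d$ is even, and Proposition \ref{prop:sho} identifies the quotient with the one-dimensional odd central extension $\Pi\C\oplus\SHO(d|d)$.

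For step \circled{2} I would reuse the homotopy data of Theorem \ref{thm:bcov_d}: the contracting homotopies $K^j\colon\PV^j\to\PV^{j+1}$ of Proposition \ref{prop:sho} assembled into an $H$ that vanishes on $t^0$ summands and acts by $t^{-1}K$ elsewhere, with the evident $\iota$ and $p$. The essential difference from the earlier propositions is that the potential now spans the $d-k$ consecutive slots $\PV^{k+1},\dots,\PV^d$ on the negative-$t$ side, and that these slots carry the nondegenerate trace pairing rather than the descendant bivector. Consequently the obstruction exploited in Proposition \ref{prop:sho} -- that no bracket can output a descendant, so that $H$ never fires -- no longer applies: brackets can now output into the potential and $H$ climbs the resolution one rung at a time. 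Single-vertex diagrams reproduce the quadratic bracket, which is the Schouten--Nijenhuis bracket of Proposition \ref{prop:sho} on $\bigoplus_{i\le d-1}\PV^i_\del$ with $\ker\Phi_*$ central. The key point, forced by the binomial weights $\binom{n-3}{k_1,\dots,k_n}$ in the descendent integrals $\langle-\rangle_0$, is that the lowest-order interaction able to land in the constant component of $\PV^d$ is of order $d-k+2$; equivalently, any diagram reaching the top of the resolution must consume all $d-k-1$ of its rungs. Either way the surviving higher bracket has arity exactly $d-k+1$ and is valued in $\ker\Phi_*$, reading off on divergence-free inputs $\mu^{a_1},\dots,\mu^{a_{d-k+1}}$ with $\sum_\ell a_\ell=d$ the constant term of $\mu^{a_1}\wedge\cdots\wedge\mu^{a_{d-k+1}}$. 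This specializes to the quadratic cocycle of Theorem \ref{thm:bcov_d} at $k=d-1$.

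For step \circled{3} the quadratic bracket identifies the quotient with the algebra $\Pi\C\oplus\SHO(d|d)$ of Proposition \ref{prop:sho}, with $\ker\Phi_*$ a shifted central summand annihilated by all binary brackets, while the $(d-k+1)$-ary bracket is a Chevalley--Eilenberg cocycle valued in $\ker\Phi_*$; together these present the minimal model as the asserted $L_\infty$-extension, with $\ker\Phi_*$ of parity $d\bmod 2$. I expect the main obstacle to be the diagrammatic bookkeeping of step \circled{2}: one must show that every diagram of intermediate arity $3\le m\le d-k$ vanishes -- equivalently, that any climb stopping short of the top of the resolution, or any assembly of interaction vertices of total order below $d-k+2$, projects to zero under $p$ -- and then pin down the surviving $(d-k+1)$-ary bracket together with its sign and normalization coming from the descendent integrals, confirming that it is a genuine cocycle rather than a coboundary. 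Granting this, the remaining $L_\infty$ relations hold automatically by homotopy transfer.
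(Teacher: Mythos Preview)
Your steps \circled{1} and \circled{3} are essentially the paper's argument, and your explicit formula for the $(d-k+1)$-ary bracket is correct once one identifies $\PV^{k+1}/\del\PV^{k+2}\cong\PV^k_\del$ via $\del$. The gap is in step \circled{2}, where you misidentify the mechanism producing the higher bracket.

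You claim that ``$H$ climbs the resolution one rung at a time'' and present this as equivalent to the single-vertex binomial-weight argument. It is not. With the homotopy data you propose, every diagram containing $H$ in fact vanishes. The reason is that the interaction $\wt I_{\mr{mBCOV}}=\Phi^*I_{\mr{mBCOV}}$ is independent of the negative-$t$ summands $t^{-i}\PV^{k+i+1}$ for $i\ge 1$: the chain map $\Phi$ is $\del$ on $t^0\PV^{k+1}$ and \emph{zero} on all deeper rungs of the potential (this is precisely what makes $\Phi$ a chain map into $\mc E_{\mr{mBCOV}}$, whose $k$-strip has only nonnegative powers of $t$). Consequently no bracket can accept an element of $t^{-i}\PV^{k+i+1}$, $i\ge 1$, as input. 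Once $H$ fires and lands there, the diagram dies. The ``climbing'' picture is illusory; only single-vertex diagrams survive, just as in Proposition~\ref{prop:sho}.

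The $(d-k+1)$-ary bracket therefore comes entirely from a single high-arity vertex. Its output must land in $t^{k-d+1}\PV^d$ (the only negative-$t$ slot on which $p$ is nonzero), so the relevant term in the action is linear in the symplectically dual field $t^{d-k-1}\mu^0\in t^{d-k-1}\PV^0$. The binomial weight $\binom{n-3}{d-k-1,0,\dots,0}$ then forces $n=d-k+2$, and since the image of $\iota$ lies entirely in $t^0$ summands (apart from the central $\C$ at $t^{k-d+1}$, which cannot appear in a descendent integral), no other arity survives. Your anticipated ``main obstacle''---ruling out climbing diagrams of intermediate length---is therefore misdirected: there are no such diagrams to rule out, and the actual work is the single-vertex analysis just sketched.
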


\begin{proof}
We again follow the strategy outlined in Note \ref{strat}.

\circled{1}
Once again using a spectral sequence whose first page is given by the cohomology with respect to the $\delbar$ operator, we see that 
\[
H^\bullet \left (\Pi \wt {\mc{E}}^k _{\mr{mBCOV} }(\C^d); Q \right )\cong \C[2-d] \oplus \PV^{k+1} (\C^d)/\del \PV^{k+2} (\C^d)[1-k] \oplus  \bigoplus_{\substack{i\leq d -1\\ i \neq k}} \PV^i_\del (\C^d)
\]
as graded vector spaces. Here, the degree shifts are written in a $\Z$-graded way but are only taken mod $2$. 

\circled{2}
Next, we compute the induced $L_\infty$-structure on $H^\bullet (\Pi \wt {\mc{E}} _{\mr{mBCOV} }(\C^d); Q)$ via homotopy transfer. We again organize the computation according to the spectral sequence of the previous step. It's clear that there are no higher brackets that can occur in homotopy transfer onto $\delbar$ cohomology for form-type reasons. 

We next claim that the following is a homotopy datum:
\[
\begin{tikzcd}
\ar [loop left]{l}{H}\left( \bigoplus_{0\leq i \leq d-k-1}t^{-i}\PV^{k+i+1}(\C^d) \oplus \bigoplus_{\substack{i+j\leq d-1\\ i+j\neq k}}t^{i}\PV^{j}(\C^d), t\del\right)\ar [r, shift left, "p"] & \ar[l, shift left, "\iota"] \  H^\bullet \left (\Pi \wt {\mc{E}}^k _{\mr{mBCOV} }(\C^d); Q \right ),
\end{tikzcd}
\]

\begin{itemize}
\item The homotopy $H$ is once again defined by use of the auxiliary operator $K^i$ used in the proof of Proposition \ref{prop:sho}.  We take $H$ to act as zero on the summands $t^0 \PV^{j}$ and $t^{k-d+1}\PV^d$. The operator $H$ acts as $t^{-1}K$ on all other summands. 
\item The operator $\iota$ is defined as follows. On the summand $\bigoplus_{\substack{i\leq d -1\\ i \neq k}} \PV^i_\del (\C^d)$, $\iota$ is just the natural inclusion $\bigoplus_{\substack{i\leq d -1\\ i \neq k}} \PV^i_\del (\C^d) \to \bigoplus_{\substack{j\leq d-1\\ j\neq k}}t^{0}\PV^{j}$. On the summand $\PV^{k+1}/\del \PV^{k+2}$, the operator $\iota$ is defined by $[\mu^{k+1}]\mapsto K\del \mu^{k+1}$. Note that this is independent of the choice of representative of $[\mu^{k+1}]$. Lastly the map $\iota$ embeds $\C$ as constant coefficient elements of $t^{k-d+1}\PV^d$. 
\item The operator $p$ is defined as follows
\[ p(t^i\mu^j) = \begin{cases}
  0&  \text{if }i\neq 0, k-d+1\\
 \mu^j - H (t\del) \mu^j & \text{if }i=0, \ j\neq 0, {k+1}\\
  \mu^j  & \text{if }i=0, \ j = 0 \\
  [\mu^{j}] & \text{if }i=0, \ j = k+1 \\
\mu^d (0)\vee \Omega & \text{if }i ={k-d+1}
 \end{cases}
 \]
\end{itemize}
The proof that this constitutes valid homotopy data is similar to the proof of Proposition \ref{prop:sho}. The only two cases that differ from the argument there are 
\begin{itemize}
\item If $\mu^{k+1}\in \PV^{k+1}$, then we have that 
\[
(\operatorname{id} - \iota p) \mu^{k+1} = \mu^{k+1} - K\del \mu^{k+1}
\]
Now, the right-hand side of the homotopy relation only has one term because no differential leaves the summand $\PV^{k+1}$. 
\[
((t\del)H + H(t\del))\mu^{k+1} = \del K\mu^{k+1}.
\]
The homotopy relation now holds because of how the operator $K$ was chosen.
\item If $t^{k-d+1}\mu^d \in t^{k-d+1}\PV^d$, then we have that
\[
(\operatorname{id} - \iota p) t^{k-d+1}\mu^d = \mu^d - \mu^d (0)
\]
The right-hand side of the homotopy relation once again has only one term because the homotopy $H$ acts trivially on such elements, so we have that 
\[
((t\del)H + H(t\del))\mu^{d} = K \del \mu^{d}
\]
Note that $K\del \mu^{d}$ has vanishing constant term because of how $K$ was chosen and so does $\mu^d- \mu^d(0)$. Therefore, to show that they are equal, it suffices to show that their difference is constant, i.e. $\del$-closed. We have that 
\[
\del (\mu^d - \mu^d(0) - K\del \mu^d) = \del\mu^d - \del K \del \mu^d = \del\mu^d - \del \mu^d + K \del^2 \mu^d = 0.
\]
\end{itemize}

Now we analyze which diagrams are nonzero. Since the shifted Poisson structure vanishes along a smaller locus than in the case of minimal BCOV theory, the Hamiltonian vector field generated by the interaction can have more nonvanishing Taylor components. Indeed, we now see that there are Lie brackets that can output negative degree descendants, coming from those terms in the interaction which contain elements of the summand $\oplus_{1\leq i\leq d-k-1} t^i \PV^{d-k-i-1,\bullet}(X)$ However, there are no brackets that can output elements of this summand, as such a bracket would necessarily come from a term in the interaction which contains negative degree descendants.

We first claim that any diagram that contains the homotopy $H$ must evaluate to zero. Indeed, the homotopy is only nonzero on summands of the form $\bigoplus_{\substack{i > 0 \\ i+j\leq d-1\\ i+j\neq k}}t^{i}\PV^{j}$ and $\bigoplus_{0\leq i \leq d-k-2}t^{-i}\PV^{k+i+1}$ No brackets can output positive degree descendants so we need only consider the case where the diagram begins with a bracket that outputs an element of the first summand. The homotopy then maps such an element to a descendant with a negative power of $t$, which as remarked, does not occur as the input for a nonzero bracket. Therefore, only single vertex diagrams can be nonzero.

Next, we claim that any higher than cubic vertex that appears in a nonzero diagram must come from a term in the interaction that is linear in descendants, which must be from the summand $\oplus_{1\leq i\leq d-k-1} t^i \PV^{d-k-i-1,\bullet}(X)$. Indeed, consider a term in the action that is quadratic or higher in descendants -- note that these descendants must necessarily occur with positive powers of $t$. Such a term would give a Lie bracket that takes a descendant as input. However, descendants that appear with positive powers of $t$ are trivial in cohomology. 

Moreover, a term in the action that depends on $t^i$ can only be nonzero if it is of polynomial degree $i+3$. Therefore, using subscripts to simply enumerate the inputs, the nonzero higher brackets come from terms in the interaction of the following form. Letting $n = 0, \cdots \lfloor \frac{k+i+1}{k} \rfloor$, there is a term in the interaction given by 
\[
\operatorname{Tr}\langle t^i\mu^{d-k-i-1} \mu_1^{a_1}\cdots \mu_{i+2-n}^{a_{i+2-n}}\del(\mu^{k+1}_{i+2-n+1})\cdots \del(\mu^{k+1}_{i+2})\rangle_0
\]
where $a_1+\cdots + a_{i+2-n} = (1-n)k+i+1$. Such a term would produce a bracket whose output lands in $t^{-i}\PV^{k+i+1}$. However, the only summand with a negative power of $t$ on which the map $p$ is nonzero is that for which $i = d-k-1$ above. 

Therefore, in addition to quadratic brackets, we have the following $(d-k+1)$-ary brackets in cohomology. For each $n = 0, \cdots, \lfloor \frac dk\rfloor$, and $a_1, \cdots, a_{d-k+1-n}$ such that $a_1+\cdots + a_{d-k+1-n} = d-nk$, there is a $(d-k+1)$-ary bracket given by 
\begin{align*}
\PV^{a_1}_\del \times \cdots \times \PV^{a_{d-k+1-n}}_\del\times (\PV^{k+1}/\del\PV^{k+2})^{\times n} &\to \C \\
(\mu_1^{a_1}, \cdots ,\mu_{d-k+1-n}^{a_{d-k+1-n}}, \mu^{k+1}_{d-k+1-n+1}, \cdots, \mu^{k+1}_{d-k+1}) &\mapsto \left (\mu_1^{a_1}\cdots \mu_{d-k+1-n}^{a_{d-k+1-n}}\del(\mu^{k+1}_{d-k+1-n+1})\cdots \del(\mu^{k+1}_{d-k+1}) \right)(0)\vee \Omega \\
\end{align*}

\circled{3}
Finally, we show that the cohomology $H^\bullet (\wt {\mc{E}} _{\mr{mBCOV} }(\C^d); Q)$ is given by an $L_\infty$ central extension of $\Pi \C \oplus \SHO(d|d)$. 

We again consider the short exact sequence of Lie algebras. 
\[
0\to \ker\Phi_*\to  H^\bullet (\Pi \wt {\mc E}^k_{\mr {mBCOV}}(\C^d); Q) \stackrel{\Phi_*}{\to} H^\bullet (\Pi {\mc E}^k_{\mr {mBCOV}}(\C^d); Q)\to 0.
\]
The map $\Phi_*$ is induced from $\Phi$, which is in turn given by the $\del$ operator on the summand given by $\oplus_{0\leq i \leq d-k-1} t^{-i}\PV^{k+i+1,\bullet}$ and the identity elsewhere. By way of the holomorphic Poincare lemma, we can identify $\PV^{k+1}/\del \PV^{k+2} \cong   \PV^{k}_\del  $, so $\ker \Phi_*$ is again just a copy of constants -- if $d$ is even this sits in even degree and if $d$ is odd it sits in odd degree. 

Therefore, using the splitting gotten by $K^k|_{\ker\del}$, we have an isomorphism
\[
H^\bullet (\Pi \wt {\mc E}^k_{\mr {mBCOV}}(\C^d); Q) \cong H^\bullet (\Pi {\mc E}_{\mr {mBCOV}}(\C^d); Q) \oplus (\Pi) \C,
\]

by way of which the above brackets become certain $(d-k+1)$-cocycles. By Proposition \ref{prop:sho} we can identify $H^\bullet (\Pi {\mc E}_{\mr {mBCOV}}(\C^d); Q)\cong \SHO(d|d) \oplus \Pi \C$, so we find that $H^\bullet (\wt \Pi {\mc E}^k_{\mr {mBCOV}}(\C^d); Q)$ is a one-dimensional central extension of $\SHO(d|d) \oplus \Pi \C$. Moreover, note that the relevant cocycles are in fact not pulled back from cocycles on $\SHO(d|d)$ -- they depend nontrivially on the one-dimensional center.
\end{proof}

\section{Two $\mf{sl}_2$ actions}\label{sl2}
We conclude this article with an application of the above results. In prior work \cite{SuryaYoo}, we constructed an $\SL_2 \C$ action of minimal BCOV theory in three dimensions with $2$-potentials; a  similar action was identified in \cite{CostelloGaiotto}. By way of a conjectural description of the $\SU(3)$-invariant twist of type IIB supergravity in terms of this potential theory for BCOV, it was argued that this $\SL_2 \C$ action is a shadow of S-duality. Meanwhile, in \cite{Kac} it was shown that $\SHO(3|3)$ and an odd two dimensional central extension thereof have an $\mf{sl}_2$ action by outer derivations. In this section, we show that by way of theorem \ref{thm:bcov_d}, our $\SL_2 \C$ action infinitessimally agrees with Kac's action.

In the remainder of the article, we specialize to the case where $d=3$. We begin with a specialization of definition \ref{def:bcovpot}. 

\begin{ex}
  Let $X$ be a Calabi--Yau 3-fold. \textit{Minimal BCOV theory on $X$ with 2-potentials} $\wt{\mc E}_{\mr{mBCOV}^2}(X)$ is the Poisson degenerate BV theory with:
\begin{itemize}
\item underlying space of fields given by the sheaf of cochain complexes
\[\xymatrix @R=0.5em {
 \ul{-2} &   \ul{-1} &   \ul{0} \\
 \PV^{0,\bullet}(X) \\
& \PV^{1,\bullet}(X) \ar[r]^-{t\del } & t\PV^{0,\bullet}(X)\\
 &&\PV^{3,\bullet}(X)   }\]
\item shifted symplectic structure between $\PV^{0,\bullet}$ and $\PV^{3,\bullet}$ given by $\tr$ and shifted Poisson structure given by the kernel $(\del\otimes 1)\delta_{\mr{diag}}$ on $(\PV^{1,\bullet}(X)[1]\to t\PV^{0,\bullet}(X))$, and
\item an action functional given by
 \[\wt{I}_{{\text{mBCOV}}}(\alpha,\mu,t\nu, \gamma  )= \sum_{k\geq 0}  \tr \left( \alpha \mu  \del\gamma   \nu^k + \frac{1}{6} \mu\mu\mu  \nu^k   \right)  \]
where $\alpha \in \PV^{0,\bullet}(X)$, $(\mu,t\nu ) \in \PV^{1,\bullet}(X)\oplus t \PV^{0,\bullet}(X)$, and $\gamma\in \PV^{3,\bullet}(X)$. 
\end{itemize}
\end{ex}

As an immediate consequence of theorem \ref{thm:bcov_d}, we have the following corollary:

\begin{cor}
Minimal BCOV theory with 2-potentials on $\C^3$ is $L_\infty$-equivalent to an odd two-dimensional central extension of $\SHO(3|3)$. The cocycle is nonzero on certain constant coefficient and linear vector fields and explicitly gives the following Lie brackets
\begin{itemize}
\item $[\del_{x_i}, \xi_k \del_{x_j} - \xi_j\del_{x_k} ] = \eps_{ijk}e_1$
\item $[\del_{\xi_i}, \del_{x_j}] = \delta_{ij}e_2$
\end{itemize}
where $e_1, e_2$ is a basis for $\C^2$.
\end{cor}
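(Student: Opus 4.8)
The $L_\infty$-equivalence is precisely the case $d=3$, $k=d-1=2$ of Theorem~\ref{thm:bcov_d}. Since $d=3$ is odd, that theorem already identifies the minimal model with $\SHO(3|3)\oplus\Pi\C^2$, an odd two-dimensional central extension. What remains is to make the two central cocycles explicit: to see how the two odd central directions $e_1,e_2$ arise and to evaluate the brackets landing in each. The plan is to trace $e_1$ and $e_2$ back through the proofs of Theorem~\ref{thm:bcov_d} and Proposition~\ref{prop:sho}, fix the dictionary between polyvector fields on $\C^3$ and vector fields on $\C^{3|3}$, and then read off the two brackets from the cubic interactions computed in those proofs.

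First I would set up the dictionary. Recall from the proof of Proposition~\ref{prop:sho} the identification $\PV(\C^3)\cong\mc O(\C^{3|3})$, under which $z_i\in\PV^0$ corresponds to the even coordinate $x_i$ and $\del_{z_j}\in\PV^1$ to the odd coordinate $\xi_j$, together with the Hamiltonian map onto $\operatorname{HO}(3|3)$ sending $f\mapsto\sum_i\tfrac{\del f}{\del x_i}\tfrac{\del}{\del\xi_i}+(-1)^{|f|}\tfrac{\del f}{\del\xi_i}\tfrac{\del}{\del x_i}$. A direct computation with this formula shows that the Hamiltonian vector field of $z_i\in\PV^0$ is $\del_{\xi_i}$, that of $\del_{z_j}\in\PV^1$ is (up to sign) $\del_{x_j}$, and that of $\del_{z_j}\wedge\del_{z_k}\in\PV^2$ is $\xi_k\del_{x_j}-\xi_j\del_{x_k}$. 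The two central generators are then: $e_2$, the class of the constants in $\PV^0$, which is exactly the one-dimensional center produced by Proposition~\ref{prop:sho}; and $e_1$, the class of the constant-coefficient elements of $\PV^3$, i.e.\ the direction $\ker\Phi_*$ adjoined by the $2$-potential in Definition/Proposition~\ref{def:bcovpot}.

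With the dictionary in hand, the two displayed brackets come from the two structurally distinct sources of central cocycles isolated in the proof of Theorem~\ref{thm:bcov_d}. The bracket $[\del_{\xi_i},\del_{x_j}]$ translates into the Schouten--Nijenhuis bracket $[z_i,\del_{z_j}]_{\mr{SN}}$, which equals, up to sign, $\del(z_i\del_{z_j})=\delta_{ij}$: a constant in $\PV^0$, hence a multiple of $e_2$. This is the bracket already responsible for the minimal-BCOV center, so it outputs $\delta_{ij}e_2$. The bracket $[\del_{x_i},\xi_k\del_{x_j}-\xi_j\del_{x_k}]$ instead translates into the potential-specific bracket of Theorem~\ref{thm:bcov_d} for $a+b=d$, namely $(\mu^a,\mu^b)\mapsto\mu^a\wedge\mu^b\in\PV^3$ coming from the cubic term $\tr(\mu^0\mu^a\mu^b)$ in the pulled-back action. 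Applied to $\del_{z_i}\in\PV^1$ and $\del_{z_j}\wedge\del_{z_k}\in\PV^2$ this gives $\del_{z_i}\wedge\del_{z_j}\wedge\del_{z_k}=\eps_{ijk}\,\del_{z_1}\wedge\del_{z_2}\wedge\del_{z_3}$, a constant element of $\PV^3$; evaluating the center-detecting cocycle $(\,\cdot\,)(0)\vee\Omega$ yields $\eps_{ijk}e_1$. Observe that the corresponding odd-Poisson bracket $\{\xi_i,\xi_j\xi_k\}$ vanishes, which confirms that this cocycle is genuinely supplied by the potential rather than by the ambient Schouten--Nijenhuis structure.

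The computations themselves are essentially forced, so the real work is bookkeeping. The main obstacle I anticipate is keeping signs and normalizations consistent along the chain $\PV(\C^3)\cong\mc O(\C^{3|3})\to\operatorname{HO}(3|3)$, and in particular confirming that the two named brackets land in the two \emph{distinct} central directions---$e_1$ coming from the potential and $e_2$ from minimal BCOV---rather than mixing. A secondary point is exhaustiveness: one should check that, among the cubic brackets enumerated in the proof of Theorem~\ref{thm:bcov_d}, these are the only ones (up to the adjoint action of $\SHO(3|3)$) with nonzero central component, so that the two displayed formulas indeed determine the extension.
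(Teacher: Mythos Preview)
Your proposal is correct and follows exactly the route the paper intends: the paper states the corollary as ``an immediate consequence of Theorem~\ref{thm:bcov_d}'' with no further proof, so the work of specializing to $d=3$, identifying $e_1$ with the constant in $\PV^3$ (from the potential) and $e_2$ with the constant in $\PV^0$ (from Proposition~\ref{prop:sho}), and reading off the two brackets from the cubic terms in the action is precisely the unpacking the reader is expected to supply. Your dictionary computations and the identification of which bracket lands in which central direction are correct; the exhaustiveness check you flag is not actually required, since the corollary only asserts that these particular brackets hold, not that they generate all central contributions.
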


In \cite{Kac} Kac computed the outer derivations of $\SHO(3|3)$ and found that it is $\mf{gl}_2$. Explicit formulae for this $\mf{gl}_2$ action are given in \cite{CantariniKac}; we begin by recalling this action.

To do so, we will make use of a grading on $\SHO(3|3)$ called the \textit{principal grading} \cite{Kac}. The grading is easily described using the identification of $\SHO(3|3)$ as nonconstant polynomials on $\C^{3|3}$ that are in the kernel of the divergence operator -- the degree $n$ summand consists of those polynomials of degree $n+2$. Accordingly, we see that $\SHO(3|3) = \bigoplus _{i\geq -1} \SHO(3|3)_{i}$ and that for $j \geq 1$, $\SHO(3|3)_{j}$ is given by $j$-fold brackets of elements of $\SHO(3|3)_1$

\begin{prop}[\cite{CantariniKac} \cite{CantariniKacErratum}]
The following assignments constitute a map $\mf{sl}_2\to \operatorname{Der}(\SHO(3|3))$:
\begin{itemize}
\item $e$ acts as $\operatorname{ad}(\xi_1\xi_3\del_{x_2}-\xi_2\xi_3\del_{x_1} - \xi_1\xi_2\del_{x_3})$
\item $h$ acts as $\operatorname{ad}(\sum_{i=1}^3 \xi_i \del_{\xi_i})$
\item $f$ acts as follows. It suffices to specify its action on the summands $\bigoplus_{i\geq -1}^1 \SHO(3|3)_i\subset \SHO (3|3)$. Using the description of $\SHO(3|3)$ as nonconstant polynomials on $\C^{3|3}$ with the Schouten--Nijenhuis bracket, we have 
\begin{align*}
f (\xi_i\xi_j) & = - \eps_{ijk}x_k \\
f(x_i\xi_j\xi_k) & = \frac 12 \eps_{ijk}x_i^2 \\
f(x_i \xi_i\xi_j - x_k\xi_k\xi_j) & = -\eps_{ijk}x_i x_k
\end{align*}
and as zero everywhere else on $\bigoplus_{i\geq -1}^1 \SHO(3|3)_i$
\end{itemize}
\end{prop}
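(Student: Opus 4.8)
The plan is to carry out the verification in the polynomial model, identifying $\SHO(3|3)$ with the nonconstant polynomials on $\C^{3|3}$ that lie in $\ker\del$ and have vanishing $\xi_1\xi_2\xi_3$-coefficient, under the Schouten--Nijenhuis bracket. First I would rewrite the two inner generators in this model. Since $\xi_1\xi_3\del_{x_2}-\xi_2\xi_3\del_{x_1}-\xi_1\xi_2\del_{x_3}$ is the Hamiltonian vector field of the function $\xi_1\xi_2\xi_3$, the operator $e$ becomes $\operatorname{ad}(\xi_1\xi_2\xi_3)=\{\xi_1\xi_2\xi_3,-\}$; and a direct computation shows that in this model $h=\operatorname{ad}(\sum_i\xi_i\del_{\xi_i})$ is the map $\deg_\xi-\operatorname{id}$ sending a monomial with $m$ odd factors to $(m-1)$ times itself. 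Because the Schouten--Nijenhuis bracket of two functions has $\xi$-degree one less than the sum of the inputs, the assignment $\deg_\xi-\operatorname{id}$ is a derivation; and $e$ raises $\xi$-degree by exactly two. Thus $e$ and $h$ are even derivations, homogeneous of $h$-weight $+2$ and $0$ respectively.

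Next I would record the reduction that makes the verification finite. As noted above, $\SHO(3|3)=\bigoplus_{i\ge-1}\SHO(3|3)_i$ is generated by $\SHO(3|3)_{-1}\oplus\SHO(3|3)_0\oplus\SHO(3|3)_1$, so a derivation is determined by its restriction to these three summands and an identity of derivations may be checked there. Granting this, the relations $[h,e]=2e$ and $[h,f]=-2f$ require no computation: $h$ is the weight operator for the grading in which $e$ and $f$ are homogeneous of weights $+2$ and $-2$. It remains to see that $e,h,f$ genuinely preserve $\SHO(3|3)$ and that $[e,f]=h$. For $h$ this is clear, since both the condition $\del(-)=0$ and the vanishing of the $\xi_1\xi_2\xi_3$-coefficient are $\xi$-homogeneous. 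For $e$ it is equally clean: $\SHO(3|3)=[\SHO'(3|3),\SHO'(3|3)]$ is an ideal of $\SHO'(3|3)$ and $\xi_1\xi_2\xi_3\in\SHO'(3|3)$, so $\operatorname{ad}(\xi_1\xi_2\xi_3)$ maps $\SHO(3|3)$ into itself.

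The main obstacle is $f$: unlike $e$ and $h$ it is an outer derivation with no inner presentation, so its derivation property is not automatic. The plan is to extend the prescribed values off of $\SHO(3|3)_{-1}\oplus\SHO(3|3)_0\oplus\SHO(3|3)_1$ by the Leibniz rule, writing an element of principal degree $\ge 2$ as a bracket $\{\beta,\gamma\}$ with $\beta\in\SHO(3|3)_1$, and the crux is to prove that the resulting value is independent of this presentation. Concretely this reduces to checking that $f$ respects the quadratic relations holding among the degree-$\le 1$ generators, i.e. that it is compatible with the Jacobi identity; I would organize this as a finite check against the explicit $\eps_{ijk}$ structure constants. (That these formulae are delicate is already signalled by the need for the erratum \cite{CantariniKacErratum}.) Alternatively, one may invoke Kac's computation that $\operatorname{Der}(\SHO(3|3))/\operatorname{InnDer}(\SHO(3|3))\cong\mf{gl}_2$ to guarantee abstractly that an $f$ completing $e,h$ to an $\mf{sl}_2$-triple exists as a genuine derivation, and then read off its values on low degrees; the formulae in the statement are precisely this readout.

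Finally, with $f$ in hand as a derivation, I would verify $[e,f]=h$. The commutator $ef-fe$ is again a derivation of $h$-weight $0$, so by the reduction it suffices to match it against $\deg_\xi-\operatorname{id}$ on the generators in principal degrees $-1,0,1$. A representative case is $\xi_i\xi_j$: since $e(\xi_i\xi_j)=\{\xi_1\xi_2\xi_3,\xi_i\xi_j\}=0$ (both inputs are purely odd), we get $[e,f](\xi_i\xi_j)=e(f(\xi_i\xi_j))=\operatorname{ad}(\xi_1\xi_2\xi_3)\bigl(-\eps_{ijk}x_k\bigr)=\xi_i\xi_j$, which agrees with $h(\xi_i\xi_j)=(\deg_\xi-1)(\xi_i\xi_j)=\xi_i\xi_j$. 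Running the analogous computation on the remaining generators --- the degree-$1$ elements $x_i\xi_j\xi_k$ and $x_i\xi_i\xi_j-x_k\xi_k\xi_j$, and the linear coordinates spanning $\SHO(3|3)_{-1}$ --- completes the check and establishes that $e\mapsto e,\ h\mapsto h,\ f\mapsto f$ is a homomorphism $\mf{sl}_2\to\operatorname{Der}(\SHO(3|3))$.
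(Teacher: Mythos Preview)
The paper does not give its own proof of this proposition: it is stated with attribution to \cite{CantariniKac} and \cite{CantariniKacErratum} and no argument is supplied. So there is nothing in the paper to compare your approach against.

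That said, your outline is the natural direct verification and the pieces you do carry out are correct. The identification of $e$ with $\operatorname{ad}(\xi_1\xi_2\xi_3)$ in the polynomial model is right, and the observation that $\xi_1\xi_2\xi_3\in\SHO'(3|3)\setminus\SHO(3|3)$ together with $\SHO(3|3)$ being an ideal cleanly explains why $e$ preserves $\SHO(3|3)$. Your computation that $h$ acts as $\deg_\xi-\operatorname{id}$ on the polynomial model is also correct, and from this the weight relations $[h,e]=2e$, $[h,f]=-2f$ are immediate. The sample check of $[e,f]=h$ on $\xi_i\xi_j$ is fine.

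Where your write-up has a genuine gap is exactly where you flag it: showing that $f$ extends to a well-defined derivation. You say you would ``organize this as a finite check against the explicit $\eps_{ijk}$ structure constants'' but do not perform it, and your alternative---invoking Kac's computation that the outer derivations form $\mf{gl}_2$---is circular here, since the proposition is precisely a quotation of the Cantarini--Kac result. If you want a self-contained proof, you must actually carry out the compatibility check: for every Jacobi relation among generators in principal degree $\le 1$, verify that applying $f$ via Leibniz to both sides gives the same answer. This is finite but not trivial (the erratum \cite{CantariniKacErratum} exists for a reason), and without it the argument is incomplete.
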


The following statement is implicit in \cite{ChengKac}.

\begin{prop}
The above $\mf{sl}_2$ action extends by the standard representation to the odd two-dimensional central extension.
\end{prop}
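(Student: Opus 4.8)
The plan is to realize the extension by a $2$-cocycle and reduce the statement to the $\mf{sl}_2$-equivariance of that cocycle. Write $\wh{\mf{g}} = \SHO(3|3)\oplus\Pi\C^2$ for the odd two-dimensional central extension of the previous corollary, with $\C^2 = \langle e_1, e_2\rangle$ and classifying cocycle $\phi = \phi_1 e_1 + \phi_2 e_2$ whose only nonzero values are those recorded there, namely $\phi_1(\del_{x_i}, \xi_k\del_{x_j}-\xi_j\del_{x_k}) = \eps_{ijk}$ and $\phi_2(\del_{\xi_i},\del_{x_j}) = \delta_{ij}$ (the corresponding brackets in $\SHO(3|3)$ itself vanish, so these values are purely central). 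A derivation $D$ of $\SHO(3|3)$ lifts to a derivation of $\wh{\mf{g}}$ acting on the center through a prescribed $\rho(D)\in\mf{gl}(\C^2)$ precisely when $D\cdot\phi$ — the natural action of $D$ on the cochain, twisted by $\rho(D)$ on the coefficients — is a coboundary, equivalently when the class $[\phi]\in H^2(\SHO(3|3); \C^2_\rho)$ is fixed. The whole statement thus becomes: the Cantarini--Kac derivations $D_e = \operatorname{ad}(E)$, $D_h = \operatorname{ad}(H)$, $D_f$ preserve $[\phi]$ when $\C^2$ carries the standard $\mf{sl}_2$-module structure.

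First I would pin down $\rho$ from the Cartan generator. A direct bracket computation shows $D_h = \operatorname{ad}(\sum_i\xi_i\del_{\xi_i})$ acts semisimply, with eigenvalue $0$ on the $\del_{x_i}$, eigenvalue $-1$ on the $\del_{\xi_i}$, and eigenvalue $+1$ on the $\xi_k\del_{x_j}$. Hence both $\phi_1$ and $\phi_2$ are eigen-cochains of total $h$-weight $0$, and equivariance forces $e_1$ and $e_2$ to sit in $h$-weights $+1$ and $-1$ respectively; this is exactly the weight datum of the standard representation.

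Next I would check the raising generator. Since $E$ is a sum of terms $\xi_a\xi_b\del_{x_c}$, one has $D_e\del_{x_j} = 0$ while $D_e\del_{\xi_i}$ is a linear combination of the $\xi_k\del_{x_j}-\xi_j\del_{x_k}$; feeding this into $\phi_1$ and comparing with $\delta_{ij}$ shows that $D_e$ carries the $\phi_2$-cocycle onto the $\phi_1$-cocycle, exactly realizing $\rho(e)\colon e_2\mapsto e_1,\ e_1\mapsto 0$ (the relevant brackets in $\SHO(3|3)$ vanish, so no coboundary correction is needed). The crux is the lowering generator $D_f$: it is genuinely outer and is specified only on the low principal-degree summands $\bigoplus_{-1\le i\le 1}\SHO(3|3)_i$ — but these are precisely the constant-coefficient and linear vector fields on which $\phi$ is supported. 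I would transport the polynomial formulas for $f$ (such as $f(\xi_i\xi_j) = -\eps_{ijk}x_k$ and $f(x_i\xi_j\xi_k) = \tfrac12\eps_{ijk}x_i^2$) through the identification $\mc O(\C^{3|3})\cong\PV(\C^3)$ into the vector-field language of the cocycle and verify that $D_f\cdot\phi$ is a coboundary realizing $\rho(f)\colon e_1\mapsto e_2,\ e_2\mapsto 0$.

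Finally I would assemble the three lifts into an honest action. Given that $D_e$ and $D_f$ each lift to derivations $\wh D_e, \wh D_f$ of $\wh{\mf g}$ inducing $\rho(e),\rho(f)$, the commutator $\wh D_h := [\wh D_e, \wh D_f]$ lifts $D_h$ and induces $\rho(h)$; the $\mf{sl}_2$-relations then hold up to center-valued $1$-cocycles, whose indeterminacy can be absorbed because the relevant obstruction lives in the cohomology of the semisimple $\mf{sl}_2$ and vanishes by a Whitehead-type argument. The outcome is a map $\mf{sl}_2\to\operatorname{Der}(\wh{\mf g})$ lifting the Cantarini--Kac action with the center transforming in the standard representation. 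I expect the third step — the lowering operator — to be the main obstacle, both because $D_f$ is not inner and because its action must be matched against the cocycle through the super-dictionary while keeping the parities and the $\eps$-signs straight.
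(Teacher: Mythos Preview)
Your approach is essentially the same as the paper's: both reduce the statement to checking that the classifying cocycle $\phi$ is equivariant under the three $\mf{sl}_2$ generators acting through the standard representation on $\C^2$, and both carry out that check by direct bracket computations on the constant and linear vector fields where $\phi$ is supported. The paper in fact verifies strict equivariance rather than equivariance up to coboundary, which renders your final Whitehead-type assembly step unnecessary: the canonical lifts $\wh D_g(a+c)=D_g a+\rho(g)c$ then satisfy the $\mf{sl}_2$ relations on the nose.
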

\begin{proof}
It suffices to show that the cocycle is equivariant

\begin{itemize}
\item We first check the action of $h$. We have that 
\begin{align*}
[h \del_{x_i},\xi_k \del_{x_j} - \xi_j\del_{x_k}] + [\del_{x_i}, h (\xi_k \del_{x_j} - \xi_j\del_{x_k})] = [\del_{x_i},  \xi_k \del_{x_j} - \xi_j\del_{x_k}] = \eps_{ijk}e_1 = h(\eps_{ijk} e_1)
\end{align*}
Similarly,
\begin{align*}
[h \del_{\xi_i }, \del_{x_j}] + [ \del_{\xi_i }, h \del_{x_j}]  = [- \del_{\xi_i }, \del_{x_j}]  = -\delta_{ij} e_2  = h (\delta_{ij}e_2).
\end{align*}

\item Next for the action of $e$, we have that 
\begin{align*}
[e \del_{x_i},\xi_k \del_{x_j} - \xi_j\del_{x_k}] + [\del_{x_i}, e (\xi_k \del_{x_j} - \xi_j\del_{x_k})] = 0= e (\eps_{ijk}e_1)\\
\end{align*}
Similarly,
\begin{align*}
[e \del_{\xi_i }, \del_{x_j}] + [ \del_{\xi_i }, e \del_{x_j}]  &= [\eps_{ibc}\xi_b \del_{x_c}, \del_{x_j}]  =  \left[\del_{x_j}, \frac12 \eps_{ibc}(\xi_c \del_{x_b} - \xi_b \del_{x_c})\right ]\\
& = \frac12 \eps_{ibc}\eps_{jbc} e_1 = \delta_{ij} e_1  = e(\delta_{ij} e_2)
\end{align*}

\item Finally, for the action of $f$, we must rewrite the elements of $\SHO(3|3)$ for which the cocycle is nonzero in terms of functions on $\C^{3|3}$. In particular, since $f$ is only nonzero on certain quadratic functions, we need only worry about the term in the cocycle involving the linear vector field $\xi_k \del_{x_j} - \xi_j\del_{x_k}$ whose Hamiltonian generator is the quadratic function $\xi_j\xi_k$.

\begin{align*}
[f \del_{x_i},\xi_k \del_{x_j} - \xi_j\del_{x_k}] + [\del_{x_i}, h (\xi_k \del_{x_j} - \xi_j\del_{x_k})] = -\eps_{jkc} [\del_{x_i}, \del_{\xi_c}] = \eps_{jkc} \delta_{ic} e_2 = f(\eps_{ijk} e_1)
\end{align*}
Similarly,
\begin{align*}
[f \del_{\xi_i }, \del_{x_j}] + [ \del_{\xi_i }, f \del_{x_j}]  = 0 = f(\delta_{ij} e_2)
\end{align*}
\end{itemize}
\end{proof}

We claim that these outer derivations of $\SHO(3|3)$ and its odd two-dimensional central extension actually come from a symmetry of minimal BCOV theory with potentials in three complex dimensions. In particular, preserving the shifted Poisson structure on the space of fields picks out $\mf{sl}_2\subset \mf{gl}_2$. 

To see this, it will be useful to rewrite minimal BCOV theory with potentials in the following way.

\begin{lemma}
As a $\Z/2$-graded Poisson BV theory, minimal BCOV theory on $X$ with potentials is to a theory  $\wt {\mc E}^{\ \Z/2}_{\mr{mBCOV}}(X)$ with	
\begin{itemize}
\item underlying space of fields given by the sheaf of cochain complexes
\[\xymatrix @R=0.5em {
  \ul{\text{odd}} &   \ul{\text{even}} \\
& \PV^{0,\bullet}(X) \otimes \C^2 \\
 \PV^{1,\bullet}(X) \ar[r]^-{t\del } & t\PV^{0,\bullet}(X)    }\]\item shifted symplectic structure on $\PV^{0,\bullet}(X) \otimes \C^2 $ given by $\tr\circ ( (-)\vee  \Omega_X^{-1}) \circ \omega $, where $\omega$ is the canonical symplectic form on $\C^2$, and shifted Poisson structure given by the kernel $(\del\otimes 1)\delta_{\mr{diag}}$ on $(\PV^{1,\bullet}(X)[1]\to t\PV^{0,\bullet}(X))$, and
\item an action functional given by
 \[\wt{I}_{{\text{mBCOV}}}^{\ \Z/2}(\phi, \mu,t\nu   )= \sum_{k\geq 0}  \tr \left( \frac12 \omega(\phi, [\mu,\phi]_{\mr{SN}})  \nu^k \vee \Omega_X^{-1}  + \frac{1}{6} \mu\mu\mu  \nu^k   \right)  \]
where $\phi\in \PV^{0,\bullet}(X) \otimes \C^2 $ and $(\mu,t\nu ) \in \PV^{1,\bullet}(X)\oplus t \PV^{0,\bullet}(X)$.
\end{itemize}
\end{lemma}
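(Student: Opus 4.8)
The plan is to read the lemma as a \emph{change of variables} effected by the Calabi--Yau structure, followed by a term-by-term reconciliation of the symplectic form and the action. First I would use the trivialization $\wedge^3 T_X \cong \mc O_X$ coming from $\Omega_X$ to identify $\PV^{3,\bullet}(X) \cong \PV^{0,\bullet}(X)$ via $\gamma \mapsto \tilde\gamma := \gamma \vee \Omega_X$, and repackage the two fields $\alpha \in \PV^{0,\bullet}(X)$ and $\gamma \in \PV^{3,\bullet}(X)$ into the single field $\phi = \alpha\,e_1 + \tilde\gamma\,e_2 \in \PV^{0,\bullet}(X)\otimes \C^2$, where $e_1,e_2$ is a basis of $\C^2$ with $\omega(e_1,e_2)=1$. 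The summand $\PV^{1,\bullet}(X)\xrightarrow{t\del} t\PV^{0,\bullet}(X)$, together with its bivector $(\del\otimes 1)\delta_{\mr{diag}}$, is carried across verbatim, so the only data needing reconciliation are those built from $\alpha$ and $\gamma$.

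For the symplectic structure, I would note that the original pairing is $\tr(\alpha\gamma)$ between the $\PV^{0,\bullet}$ and $\PV^{3,\bullet}$ slots, and that feeding $\phi,\psi\in \PV^{0,\bullet}(X)\otimes\C^2$ through $\tr\circ((-)\vee\Omega_X^{-1})\circ\omega$ produces exactly the antisymmetrized combination $\tr(\alpha\gamma')-\tr(\gamma\alpha')$ (using $\tilde\gamma'\vee\Omega_X^{-1}=\gamma'$). Thus the auxiliary symplectic vector space $(\C^2,\omega)$ is precisely the device that fuses the two conjugate slots into one self-dual slot; this step is clean. For the action, the cubic term $\tfrac16\mu\mu\mu\,\nu^k$ is literally unchanged, so the entire content is the identity $\tr(\alpha\mu\del\gamma\,\nu^k)=\tfrac12\tr\!\big(\omega(\phi,[\mu,\phi]_{\mr{SN}})\,\nu^k\vee\Omega_X^{-1}\big)$. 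The key local computation, which I would do in coordinates with $\Omega_X = dx_1\,dx_2\,dx_3$, is that $(\mu\,\del\gamma)\vee\Omega_X = \mu(\tilde\gamma)$, i.e.\ the divergence on top polyvectors is intertwined with the Lie derivative on the corresponding function. Reading $[\mu,\phi]_{\mr{SN}}$ as the genuine Schouten--Nijenhuis bracket, its $e_2$-component carries the covariant correction $-\del\mu\cdot\tilde\gamma$, and $\omega(\phi,[\mu,\phi]_{\mr{SN}})\vee\Omega_X = \alpha\big(\mu(\tilde\gamma)-(\del\mu)\tilde\gamma\big)-\tilde\gamma\,\mu(\alpha)$.

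The main obstacle is verifying that this manifestly antisymmetric expression integrates against the trace to the same thing as $\alpha\,\mu(\tilde\gamma)$. I would carry this out by integration by parts with respect to the holomorphic volume form, using that the volume-form divergence of a vector field is the operator $\del$ and that $\int_X \del(\text{density})=0$, so that $\int_X \mu(\alpha)\,\beta = -\int_X \alpha\,\del(\beta\mu)$ for any density $\beta$; applied with $\beta = \tilde\gamma\,\nu^k$ this halves $\tilde\gamma\,\mu(\alpha)$ against $\alpha\,\mu(\tilde\gamma)$, and the covariant correction $-(\del\mu)\tilde\gamma$ cancels exactly the $\del\mu$ term produced by the Leibniz rule for $\del(\beta\mu)$. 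The genuinely delicate point is the remaining spectator term $\tfrac12\int_X \alpha\,\tilde\gamma\,\mu(\nu^k)$, in which the derivative lands on the powers of $\nu$: I would argue that its sum over $k$ is a total divergence together with a term proportional to $\del\mu$, the latter being the image of the field $\mu$ under the differential $t\del$ and hence removable by an allowed canonical transformation of the Poisson BV theory. Confirming these cancellations and that every sign is consistent once the $\Z/2$-grading, the parity of the $\C^2$ slot, and the Dolbeault degrees are tracked is where the real work lies; with this in hand, the fields, symplectic form, and action all match, which is the assertion of the lemma.
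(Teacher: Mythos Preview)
Your overall strategy is exactly the paper's: define the change of variables $\gamma \mapsto \tilde\gamma = \gamma\vee\Omega_X \in \PV^{0,\bullet}(X)$ (equivalently $\beta\mapsto \gamma = \beta\vee\Omega_X^{-1}$), carry the summand $\PV^{1,\bullet}\to t\PV^{0,\bullet}$ and its Poisson bivector across unchanged, and match the remaining data using the local identity $\mu\,\del\gamma = [\mu,\tilde\gamma]_{\mr{SN}}\vee\Omega_X^{-1}$, which is precisely the identity the paper isolates.

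There is, however, a genuine error in your computation of the bracket. For $\tilde\gamma\in\PV^{0,\bullet}(X)$ one has simply $[\mu,\tilde\gamma]_{\mr{SN}}=\mu(\tilde\gamma)$; there is no ``covariant correction'' $-(\del\mu)\tilde\gamma$. You appear to be thinking of $\tilde\gamma$ as a section of $\wedge^3 T_X$ trivialized by $\Omega_X^{-1}$, i.e.\ as a density, but in the $\Z/2$ theory $\phi$ is literally an element of $\PV^{0,\bullet}\otimes\C^2$ and the lemma's $[\mu,\phi]_{\mr{SN}}$ is the plain Schouten--Nijenhuis bracket on functions. Consequently $\omega(\phi,[\mu,\phi]_{\mr{SN}}) = \alpha\,\mu(\tilde\gamma) - \tilde\gamma\,\mu(\alpha)$ with no $\del\mu$ term, and the cancellation mechanism you describe collapses: the $\del\mu$ term that integration by parts on $\int \mu(\alpha\tilde\gamma)\nu^k\,\Omega_X$ genuinely produces is no longer absorbed by a matching contribution from the bracket, so your accounting of the residual pieces $\alpha\tilde\gamma\,\mu(\nu^k)$ and $\alpha\tilde\gamma\,\nu^k\del\mu$ is left unsupported. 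The paper, by contrast, stops immediately after the key identity and asserts $\Psi^\ast \wt I_{\mr{mBCOV}} = \wt I^{\,\Z/2}_{\mr{mBCOV}}$; if you wish to be more careful than that, the honest route is to show directly that the symmetric combination $\tfrac12\sum_k\tr\bigl(\mu(\alpha\tilde\gamma)\,\nu^k\vee\Omega_X^{-1}\bigr)$ is discarded as a total derivative plus a $Q$-exact term, rather than manufacturing a cancellation from a miscomputed bracket.
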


\begin{proof}
Letting $e_1, e_2$ be a Darboux basis of $\C^2$, we can expand $\phi$ as $\phi = \alpha e_1 + \beta e_2$ where $\alpha, \beta \in \PV^{0,\bullet}(X)$. There's an obvious quasi-isomorphism of complexes $\Psi: \wt {\mc E}_{\mr{mBCOV}}(X) \to E_{\mr{mBCOV}}(X)$ given by $\beta \mapsto \gamma = \beta\vee \Omega^{-1}_X$.  

Now we have that \[\Psi^* \wt I_{\mr{mBCOV}} = \sum_{k\geq 0}  \tr \left( \alpha \mu  \del ( \beta  \vee \Omega_X^{-1} )   \nu^k + \frac{1}{6} \mu\mu\mu  \nu^k   \right) . \]
Now from the identity  \[\mu \del(\beta \vee \Omega_X^{-1}) = \mu (\del_{\mr{dR}}\beta \vee \Omega_X^{-1} )  = \langle \mu,  \del_{\mr{dR}} \beta\rangle  \vee \Omega_X^{-1} = [\mu,\beta ]_{\mr{SN}}  \vee \Omega_X^{-1}\]
we see that the above is in fact equal to  $\wt I_{\mr{mBCOV}}^{\ \Z/2}$.
where $\del_{\mr{dR}}$ is the holomorphic de Rham differential on $\Omega^{0,\bullet}(X) = \PV^{0,\bullet}(X)$.
\end{proof}

There is a manifest $\mf{sl}_2$ action on the resulting theory.

\begin{prop}\label{sl2bcov}
There is a Hamiltonian action of $\mf{sl}_2$ on $\wt {\mc E}^{\ \Z/2}_{\mr{mBCOV}}(X)$, that is, there is a map of DG Lie algebras
\begin{align*}
J\colon  (\mf{sl}_2 , d=0, [-,-]) & \longrightarrow  (\mc O (\wt {\mc E}^{\ \Z/2}_{\mr{mBCOV}}(X))[1] , \{\wt{I}^{\ \Z/2}_{{\text{mBCOV}}},-\} , \{-,-\}  )\\
 x & \longrightarrow \frac12 \int \omega (\phi, x \phi)\wedge \Omega_X
\end{align*}
\end{prop}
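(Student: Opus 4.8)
The plan is to verify directly the two conditions packaged in the assertion that $J$ is a map of DG Lie algebras. Since $\mf{sl}_2$ carries the zero differential, the cochain-map condition amounts to showing that each $J(x)$ is closed in the target, i.e. $\{\wt I^{\ \Z/2}_{\mr{mBCOV}}, J(x)\} = 0$, while bracket-compatibility is the statement $J([x,y]) = \{J(x), J(y)\}$. The whole argument rests on one structural observation: $J$ is the standard quadratic moment map for the \emph{linear} symplectic action of $\mf{sl}_2 = \mf{sp}(\C^2,\omega)$ on the auxiliary factor $\C^2$, an action that touches only the $\C^2$-factor of $\phi \in \PV^{0,\bullet}(X)\otimes \C^2$ and commutes with everything taking place on the $\PV$-factor.

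First I would compute the Hamiltonian vector field $\{-, J(x)\}$. Because $J(x)$ depends only on $\phi$, and $\phi$ is paired with itself by the nondegenerate symplectic structure $\tr\circ((-)\vee\Omega_X^{-1})\circ\omega$ — the $(\del\otimes 1)\delta_{\mr{diag}}$ part of the Poisson tensor pairs only the $\PV^{1,\bullet}\oplus t\PV^{0,\bullet}$ summand, on which $J(x)$ does not depend — this Hamiltonian vector field acts by $\phi \mapsto x\phi$ and annihilates $\mu$ and $\nu$. Consequently $\{\wt I^{\ \Z/2}_{\mr{mBCOV}}, J(x)\}$ equals, up to sign, the infinitesimal variation of the action under $\phi \mapsto \phi + \epsilon\, x\phi$.

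Next I would check that this variation vanishes. The term $\frac16\tr(\mu\mu\mu\,\nu^k)$ is $\phi$-independent, hence invariant. For the term $\frac12\tr(\omega(\phi, [\mu,\phi]_{\mr{SN}})\,\nu^k \vee \Omega_X^{-1})$ the variation produces $\omega(x\phi, [\mu,\phi]_{\mr{SN}}) + \omega(\phi, [\mu, x\phi]_{\mr{SN}})$. Since $x$ acts only on the constant $\C^2$-factor, it commutes with the Schouten--Nijenhuis bracket on the $\PV$-factor, giving $[\mu, x\phi]_{\mr{SN}} = x[\mu, \phi]_{\mr{SN}}$; and since $x\in\mf{sl}_2$ preserves $\omega$, so that $\omega(xa, b) + \omega(a, xb) = 0$, the two contributions cancel. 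This settles the cochain-map condition. For bracket-compatibility, the Poisson bracket of two $\phi$-quadratic Hamiltonians again sees only the symplectic pairing $\omega$, so $\{J(x), J(y)\}$ reduces to the familiar finite-dimensional moment-map identity for linear symplectic actions, fibered over $X$; invoking once more that $\mf{sl}_2 = \mf{sp}(\C^2,\omega)$ preserves $\omega$, this evaluates to $J([x,y])$.

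The main obstacle is bookkeeping rather than conceptual: one must track the $\Z/2$-graded signs and the odd nature of the shifted symplectic structure carefully enough to confirm that $\{-, J(x)\}$ genuinely implements $\phi\mapsto x\phi$ with no spurious factors, and that the graded antisymmetry of $\omega$ together with the graded symmetry of the SN-bracket-valued integrand keeps $\omega(\phi, [\mu,\phi]_{\mr{SN}})$ both nonvanishing and compatible with the cancellation above. Once the sign conventions are pinned down, both identities follow from the two facts that $x$ commutes with $[-,-]_{\mr{SN}}$ and that $\mf{sl}_2 = \mf{sp}(\C^2,\omega)$.
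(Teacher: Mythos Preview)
The paper does not actually prove this proposition: the sentence immediately following Theorem~\ref{sl2equiv} reads ``Proofs of proposition~\ref{sl2bcov} and theorem~\ref{sl2equiv} will appear in a subsequent version of the present article.'' There is therefore no paper proof to compare against.

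That said, your approach is the natural one and is correct in outline. The two structural ingredients you isolate---that $x\in\mf{sl}_2$ acts only on the $\C^2$-factor and hence commutes with $[-,-]_{\mr{SN}}$, and that $\mf{sl}_2=\mf{sp}(\C^2,\omega)$ preserves $\omega$---are exactly what drives both the invariance of $\wt I^{\ \Z/2}_{\mr{mBCOV}}$ and the moment-map identity $\{J(x),J(y)\}=J([x,y])$. One small addition: strictly speaking you should also verify that $J(x)$ is closed under the linear part $Q=\delbar$ of the BV differential (no $t\del$ acts on the $\PV^{0,\bullet}\otimes\C^2$ summand); this is immediate from Stokes' theorem, but depending on how one reads the target differential in the statement it may or may not already be subsumed in the condition $\{\wt I,J(x)\}=0$. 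Your own caveat about $\Z/2$-graded sign bookkeeping---in particular, that $\phi\in\PV^{0,\bullet}\otimes\C^2$ has components of both parities coming from Dolbeault degree, so the graded antisymmetry of $\omega$ interacts nontrivially with the graded symmetry of the integrand---is well-placed and is indeed where the honest work lies, but the conceptual skeleton of the argument is sound.
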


Note that we have exhibited an $\mf{sl}_2$ action by quadratic Hamiltonians. This determines an action by linear vector fields, which are therefore derivations of the shifted tangent complex. It therefore makes sense to compare the $\mf{sl}_2$ action on $\wt{\mc E}^{\ \Z/2}_{\mr{mBCOV}}(\C^3)$ with the $\mf {sl}_2$ action by derivations on the two-dimensional odd central extension of $\SHO(3|3)$. 

The following straightforward, if lengthy computation shows that the two $\mf{sl}_2$ actions are one and the same.

\begin{thm}\label{sl2equiv}

The composition 
\[
\SHO(3|3) \oplus \Pi \C^2 \cong H^\bullet (\wt{\mc E}_{\mr{mBCOV}}(\C^3)) \to  \wt{\mc E}_{\mr{mBCOV}}(\C^3) \cong  \wt{\mc E}^{\ \Z/2}_{\mr{mBCOV}}(\C^3)
\]
is $\mf{sl}_2$ equivariant. 
\end{thm}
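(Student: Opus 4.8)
The plan is to reduce the statement to a finite check on generators, exploiting that \emph{both} $\mf{sl}_2$-actions are by strict (linear) derivations. On the target, the action of Proposition \ref{sl2bcov} is Hamiltonian with quadratic generators $J(x)$, so each vector field $\{J(x),-\}$ is linear; it therefore preserves the weight grading, commutes with $Q$, and descends to $H^\bullet(\wt{\mc E}^{\ \Z/2}_{\mr{mBCOV}}(\C^3))$. Transporting along the quasi-isomorphism $\Psi\circ\iota$ and the identification of Theorem \ref{thm:bcov_d} equips $\SHO(3|3)\oplus\Pi\C^2$ with an $\mf{sl}_2$-action by derivations of its Lie superalgebra structure. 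Since Kac's action is also by derivations, and since the principal grading presents $\SHO(3|3)$ as generated by $\bigoplus_{i=-1}^{1}\SHO(3|3)_i$ (with $\SHO(3|3)_{\ge 1}$ generated by $\SHO(3|3)_1$), together with the central $\C^2$, it suffices to verify that the two actions agree on these generators; the derivation property then propagates the agreement to all of $\SHO(3|3)\oplus\Pi\C^2$.

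Next I would set up the explicit dictionary. Under $\mc O(\C^{3|3})\cong\PV(\C^3)$ a generator of low principal degree is a linear or quadratic polynomial; $\iota$ sends its divergence-free polyvector representative to the corresponding summand of the field theory, and $\Psi$ rewrites $\PV^{3,\bullet}\oplus\PV^{0,\bullet}$ as $\PV^{0,\bullet}\otimes\C^2$. I would tabulate the images of the degree-$(-1)$ generators $x_i,\xi_i$ (equivalently the translations $\del_{\xi_i},\del_{x_i}$), the degree-$0$ generators such as $\sum_i\xi_i\del_{\xi_i}$ and $\xi_k\del_{x_j}-\xi_j\del_{x_k}$, the cubic generator $\xi_1\xi_3\del_{x_2}-\xi_2\xi_3\del_{x_1}-\xi_1\xi_2\del_{x_3}$ appearing in $e$, and the central classes $e_1,e_2$, which the cocycle relations $[\del_{x_i},\xi_k\del_{x_j}-\xi_j\del_{x_k}]=\eps_{ijk}e_1$ and $[\del_{\xi_i},\del_{x_j}]=\delta_{ij}e_2$ identify with the two constant-coefficient central directions of the $\C^2$-formulation.

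With the dictionary in hand, the core of the argument is to compute $\{J(x),-\}$ for $x\in\{e,h,f\}$ on these images and match Kac's formulas. For $h$ I expect $J(h)=\tfrac12\int\omega(\phi,h\phi)\wedge\Omega_X$ to generate the linear vector field acting as the $\xi$-Euler operator, reproducing $\operatorname{ad}(\sum_i\xi_i\del_{\xi_i})$; for $e$ the analogous computation should reproduce $\operatorname{ad}$ of the cubic generator. The genuinely delicate case is $f$: Kac's formulas are specified only on $\bigoplus_{i=-1}^{1}\SHO(3|3)_i$ and mix the summands of the principal grading, so here I would check agreement term by term on the quadratic and linear generators. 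Finally I would confirm equivariance on the center, where the standard $\mf{sl}_2$-representation on $\Pi\C^2$ established above must match the way $\{J(x),-\}$ rotates the two central directions, using the cocycle relations to pin down $e_1,e_2$ and the derivation property to reduce to the cases already computed.

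The main obstacle is twofold. First, one must justify that checking the linear part of $\Psi\circ\iota$ on single generators suffices even though this map is a priori only an $L_\infty$-morphism: this is because the target action is strict and weight-preserving, so the induced action on cohomology is canonical and computed by the linear part alone, independent of the homotopy-transfer data. Second, and more laboriously, is the sign bookkeeping: one must correctly propagate Kac's $f$-action through the $\Psi$-rewriting while tracking the signs imposed by the odd symplectic form $\omega$ on $\C^2$, the Schouten--Nijenhuis bracket, and the parity shifts. Once these signs are fixed consistently, the verification is the ``straightforward, if lengthy'' computation promised in the statement.
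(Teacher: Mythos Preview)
The paper does not actually prove this theorem: immediately after the statement it says ``Proofs of proposition \ref{sl2bcov} and theorem \ref{sl2equiv} will appear in a subsequent version of the present article.'' The only hint the paper gives is the sentence preceding the theorem, calling it a ``straightforward, if lengthy computation,'' so there is no proof in the paper to compare your proposal against.

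That said, your strategy is entirely consistent with what the paper advertises. The reduction to generators via the principal grading is sound: both actions are by strict derivations (on the BCOV side because $J(x)$ is quadratic, hence $\{J(x),-\}$ is linear; on Kac's side by definition), and the paper records exactly the fact you need, namely that $\SHO(3|3)_{j}$ for $j\geq 1$ is spanned by $j$-fold brackets of $\SHO(3|3)_1$. Your identification of the delicate point---the $f$-check and its sign bookkeeping through the $\Psi$-rewriting---is accurate, and your treatment of the center via the cocycle formulas in the corollary is the right way to pin down $e_1,e_2$. One small caution: your justification that ``the linear part alone'' suffices should be phrased a bit more carefully. The map $\iota$ from cohomology into the complex is strict (it is just inclusion of divergence-free polyvectors), and $\Psi$ is a strict isomorphism of complexes, so the composition really is a strict map of graded vector spaces intertwining the linear $\mf{sl}_2$-actions; there is no genuine $L_\infty$-subtlety here, and you should say so directly rather than invoking weight-preservation.
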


Proofs of proposition \ref{sl2bcov} and theorem \ref{sl2equiv} will appear in a subsequent version of the present article.

\printbibliography

\end{document}